\newcommand{\be}{\begin{equation}}
\newcommand{\ee}{\end{equation}}
\newcommand{\bes}{\begin{equation}\begin{aligned}}
\newcommand{\ees}{\end{aligned}\end{equation}}
\newcommand{\ben}{\begin{equation}\nonumber\begin{aligned}}
 \newcommand{\R}{\mathbb{R}}
\newcommand{\N}{\mathbb{N}}
\renewcommand{\leq}{\leqslant}
\renewcommand{\geq}{\geqslant}
\newtheorem{thm}{Theorem}[section]
\newtheorem{lem}[thm]{Lemma}
\newtheorem*{main thm}{Main Theorem}
\numberwithin{equation}{section}
\begin{document}

\title{A sharp recovery condition for  block sparse signals by  block orthogonal multi-matching  pursuit}

\author{Wengu~Chen and Huanmin~Ge
\thanks{W. Chen is with Institute of Applied Physics and Computational Mathematics,
Beijing, 100088, China, e-mail: chenwg@iapcm.ac.cn.}
\thanks{H. Ge is with Graduate School, China Academy of Engineering Physics,
Beijing, 100088, China, e-mail:gehuanmin@163.com.}
\thanks{This work was supported by the NSF of China (Nos.11271050, 11371183)
.} }

\maketitle

\begin{abstract}
We consider the block orthogonal multi-matching  pursuit (BOMMP) algorithm for the recovery of block sparse signals.
A sharp bound is obtained  for the exact reconstruction of  block $K$-sparse signals via the BOMMP algorithm in the noiseless case, based on the block restricted isometry constant (block-RIC). Moreover,
 we show that the sharp bound  combining with an extra condition on the minimum $\ell_2$ norm of
 nonzero blocks of block $K-$sparse signals  is sufficient to recover the true support of block $K$-sparse
  signals by the BOMMP  in the noise case. The significance of the results we obtain in this paper lies
   in the fact that making explicit use of block sparsity of block sparse signals can achieve  better recovery
    performance than ignoring the additional structure in the problem as being in the conventional sense.
\end{abstract}

{Keywords: Compressed sensing,  block sparse signal, block restricted isometry property, block orthogonal multi-matching pursuit.}

\begin{bfseries}
Mathematics Subject Classification (2010)
\end{bfseries}
65D15, 65J22, 68W40

\section{Introduction}

 The framework of compressed sensing is concerned with the reconstruction of unknown sparse signals from an underdetermined linear system in \cite{h1},\cite{h2}. More concretely, this can be described as
 \begin{eqnarray}\label{pro1}
y=Ax+e,
\end{eqnarray}
where $y\in \R^m$ is a vector of measurements, the matrix $A\in\R^{m\times n}$ with $m\ll n$ is a known sensing matrix, the vector $x\in\R^n$ is a unknown $K$-sparse signal $(K\ll n)$ and $e\in\R^m$ is measurement error.  The goal is to recover the unknown signal $x$ based on $y$ and $A$. It has triggered  different  efficient methods which can be proved to recover  unknown $K$-sparse signals $x$ under a variety of different conditions on sensing matrix $A$ \cite{2}-\cite{12'}.

In this paper, we consider the unknown signal $x$ of the model \eqref{pro1} that exhibits additional structure
 in the form of the nonzero coefficients occurring in blocks. Such signal is called block sparse signal \cite{g13}, \cite{c1}.  We explicitly take this block structure into account to recover block signals
through the BOMMP algorithm.
Block sparse signals arise naturally  in many fields including DNA microarrays \cite{g10}, equalization of sparse communication \cite{g11},
multi-band signals \cite{g12}-\cite{h3} and the multiple measurement vector (MMV) problem \cite{c2}-\cite{c6}.

 Following  \cite{g13}, \cite{g14}, a block sparse signal $x\in\R^n$ over $\mathcal{I}=\{d_1,d_2,\ldots,d_l\}$ is a concatenation of $l$ blocks of length $d_i\ (i=1,2,\cdots,l)$, i.e.,
\begin{eqnarray}\label{m1}
x=[\underbrace{x_1 \ldots\ x_{d_1}}_{x'[1]} \underbrace{x_{d_1+1}\ldots\  x_{d_1+d_2}}_{x'[2]}\ldots \underbrace{x_{n-d_l+1}\ldots x_n}_{x'[l]}]^{'}
\end{eqnarray}
where $x[i]$ denotes the $i$th block of $x$ and $n=\sum_{i=1}^ld_i$. $x$ is called block $K-$sparse if $x[i]$ has nonzero $\ell_2$ norm for at most $K$ indices $i$. That is, $\sum_{i=1}^lI(\|x[i]\|_2>0)\leq K$, where $I(\cdot)$ is an indicator function. Denote $\|x\|_{2,0}=\sum_{i=1}^lI(\|x[i]\|_2>0)$ or $T=\textmd{block-supp}(x)=\{i:\|x[i]\|_2>0, \ i=1,2,\cdots,l\}$,
then a block $K-$sparse signal $x$ satisfies  $\|x\|_{2,0}\leq K$ and $|T|\leq K$. If $d_i=1\ (i=1,2,\cdots,l)$, the block sparse signal reduces to the conventional sparse signal \cite{h1}, \cite{h2}.
Similar to \eqref{m1},
sensing matrix $A$ can be expressed
as a concatenation of $l$ column blocks, i.e.,
\begin{eqnarray*}
 A=[\underbrace{A_1\ldots A_{d_1}}_{A[1]} \underbrace{A_{d_1+1}\ldots A_{d_1+d_2}}_{A[2]} \ldots \underbrace{A_{n-d_l+1}\ldots A_n}_{A[l]}],
\end{eqnarray*}
where $A_i$ is the $i$th column of $A$ for $i=1,2,\cdots,n$.

 To recover block
sparse signals $x$, one approach to exploiting block sparsity is  the mixed $\ell_2/\ell_0$ norm minimization:
\begin{eqnarray*}
  \min_{x}\|x\|_{2,0}\ \ \textmd{subject} \ \textmd{to} \ \ \|Ax-y\|_2\leq\varepsilon,
\end{eqnarray*}
where $\varepsilon$ is the noise level. In noiseless case,
$\varepsilon=0$. The  minimization problem is a suitable extension
of the standard $\ell_0$-minimization problem.
 This minimization problem is also NP-hard. Instead, some efficient
methods making explicit use of block sparsity to imply the recovery of block sparse signals include  the mixed $\ell_2/\ell_1$  norm minimization \cite{g13}, \cite{g14}, \cite{g13''}-\cite{cc6}, the mixed $\ell_2/\ell_p(0<p<1)$ norm minimization \cite{g15}-\cite{g17}, the BOMP algorithm \cite{g14},\cite{g17',g17'',g18}, the sparsity adaptive regularized  OMP algorithm  \cite{g18'},
the block version of StOMP algorithm \cite{g18''}.

To investigate the recovery of block sparse signals, Eldar and Mishali introduced the notion of the block restricted isometry property(block-RIP) and also demonstrated  that
  the block-RIP has advantages over standard RIP  in \cite{g13}.
Sensing matrix $A$ satisfies the block-RIP of order  $K$ if there exist parameters $\delta_{K|\mathcal{I}}\in [0,\ 1)$ such that
\begin{eqnarray*}
  (1-\delta_{K|\mathcal{I}})\|x\|_2^2\leq\|Ax\|_2^2\leq(1+\delta_{K|\mathcal{I}})\|x\|_2^2
\end{eqnarray*}
for all block $K-$sparse signals $x$ over $\mathcal{I}$, where the smallest constant $\delta_{K|\mathcal{I}}$ is called as the block restricted isometry constant (block-RIC) of $A$.  By abuse of notation, we use $\delta_K$
for the block-RIC $\delta_{K|\mathcal{I}}$ when it is clear from the context.

This paper focuses on the BOMMP algorithm firstly proposed in \cite{g25} and described in Table $1$, which is a natural extension of the  BOMP algorithm. The BOMP algorithm only selects
one correct block index at each iteration. However,  the BOMMP algorithm identifies
$N(N\geq 1)$ block indices which contain at least one correct block index from the block support of the block sparse signal $x$ per iteration.
In \cite{g25},  the block-RIC
\begin{eqnarray}\label{geh1}
  \delta_{K+(N-2)k+N}<\frac{1}{1+\sqrt{\frac{K-k+1}{N}}}, \ \ \ \  (1\leq k \leq K)
\end{eqnarray}
is proved to be sufficient for the BOMMP to recover block $K-$sparse signals and  simulations  demonstrate the recovery performance of the BOMMP overtaking those of
the BOMP and BMP.

In this paper,  we provide a sharp sufficient condition of the reconstruction  of block  $K$-sparse signals through  the BOMMP.
In the noiseless case, we prove  that the condition with the block-RIC satisfying
\begin{eqnarray}\label{ghm1}
\delta_{NK+1}<\frac{1}{\sqrt{\frac{K}{N}+1}}
\end{eqnarray}
is sufficient to perfectly  recover  any block $K$-sparse signals via the BOMMP.  Moreover, we also prove that the sufficient condition \eqref{ghm1} is optimal, i.e., the for any given $K\in \N^+$, we construct a matrix $A$ satisfying
\begin{eqnarray*}
\delta_{NK+1}=\frac{1}{\sqrt{\frac{K}{N}+1}}
\end{eqnarray*}
such that the BOMMP may fail to recover  some block $K$-sparse signals $x$. Lastly,
 we also show $\delta_{NK+1}<\frac{1}{\sqrt{\frac{K}{N}+1}}$ together with a minimum $\ell_2$ norm of  nonzero blocks of the $K$-sparse signal $x$ can ensure the recovery of the support of $x$ through the BOMMP in noise case. If $N=1$, then the above condition \eqref{ghm1} is a sharp sufficient condition for the
recovery of block sparse signals by the BOMP \cite{g18}.
When $d_i=1\ (i=1,2,\cdots,l)$, the condition \eqref{ghm1} ensures that  the  gOMP or OMMP stably  recovers the sparse signal \cite{g26}, \cite{g27} and  is  also sharp \cite{g26}.
As $N=1$ and $d_i=1\ (i=1,2,\cdots,l)$, this condition \eqref{ghm1} turns to be a sharp sufficient condition for sparse  recovery  through OMP  \cite{16'}.

\begin{center} TABLE 1

The BOMMP algorithm
\end{center}
\hrule
\textbf{Input}\ \ \ \ \ \ \ \ \ measurements $y\in\R^m$, sensing matrix $A\in\R^{m\times n}$, block sparse level $K$,

\ \ \ \ \ \ \ \ \ \ \ \ \ \ number of indices for each selection $N$ $(N\leq K \ \textmd{and}\  N\leq\frac{m}{K})$.\\
\textbf{Initialize}\ \ \ \ iteration count $k=0$, residual vector $r^0=y$, estimated block support set

\ \ \ \ \ \ \ \ \ \ \ \ \ \ $\Lambda^0=\varnothing$.

\hrule
\textbf{While}\ \ \ \ \ $\|r^k\|_2>\epsilon$  and $k<\min\{K, \frac{m}{K}\}$ do \ $k=k+1$.

\ \ \ \ \ \ \ \ \ \ \ \ \ \ \ (Identification\ step)\ \ \ \ Select block indices set $T^k$ corresponding to $N$ largest

\ \ \ \ \ \ \ \ \ \ \ \ \ \ \  \ \ \ \ \ \ \ \ \ \ \ \ \ \ \ \ \ \ \ \ \ \ \ \ \ \ \ \ \ \  $\ell_2$ norm  of $\|A^{'}[i]r^{k-1}\|_2(i=1,2,\cdots,l)$.

\ \ \ \ \ \ \ \ \ \ \ \ \ \ \ (Augmentation step)\ \ \ $\Lambda^k=\Lambda^{k-1}\cup T^k$.

\ \ \ \ \ \ \ \ \ \ \ \ \ \  \ (Estimation step)\ \ \ \ \ \ \ \ $\hat{x}_{\Lambda^k}=\arg\min\limits_{u}\|y-A_{\Lambda^k}u\|_2$.

\ \ \ \ \ \ \ \ \ \ \ \ \ \  \ (Residual Update step)\ \ \ $r^k=y-A_{\Lambda^k}\hat{x}_{\Lambda^k}$.

\textbf{End}

\textbf{Output}\ \ \ \  the estimated signal
$\hat{x}=\arg\min\limits_{u:block-supp(u)=\Lambda^k}\|y-Au\|_2$. \hrule
\ \ \\

We begin, in Section \ref{2}, by
 giving some notations and  some basic lemmas that will be used. The main
results and their proofs are given in Section \ref{3}.

\section{Notations and lemmas}\label{preliminaries}\label{2}
 Throughout this paper, let $\Gamma\subseteq \{1,2,\ldots,l\}$ be a block index set and $\Gamma^c$ be the complementary set of $\Gamma$. Define a mixed $\ell_2/\ell_p$ norm with $p=1,2,\infty$
 as $\|x\|_{2,p}=\|w\|_p$, where $w\in\R^l$ with $w_i=\|x[i]\|_2$ for $i=1,2,\cdots,l$. Note that $\|x\|_{2,2}=\|x\|_2$. Let $\mathcal{I}_\Gamma=\{d_i: i\in \Gamma\}$ and  the block vector $x_\Gamma \in\R^{\sum_{i\in \Gamma}d_i}$ over $\mathcal{I}_\Gamma$  be  a concatenation of  $|\Gamma|$ blocks of length $d_i$($i\in \Gamma$). And let the block vector $\tilde{x}_\Gamma$ over $\mathcal{I}$ be  a concatenation of $l$ blocks of length $d_i$($i\in \mathcal{I}$)  satisfying
 \begin{eqnarray*}
\tilde{x}_\Gamma[i]= \left\{
   \begin{array}{ll}
     x_\Gamma[i], & \hbox{$i\in \Gamma$;} \\
     0\in\R^{d_i}, & \hbox{$i\in \{1,2,\ldots,l\}-\Gamma$,}
   \end{array}
 \right.
 \end{eqnarray*}
where $i=1,2,\cdots,l.$ Similarly,
Let $A_\Gamma $ over $\mathcal{I}_\Gamma$  be the submatrix of $A$, which is  a concatenation of  $|\Gamma|$ column blocks of length $d_i$($i\in \Gamma$).
Let $e_i\in\R^n$ be the $i$-th coordinate unit vector and $I_d$ be  the $d$-dimensional identity matrix, where $d$ is a positive integer.

Let $\alpha_N^{k+1}$ be the $N$-th largest  $\ell_2$ norm of $\|A^{'}[i]r^k\|_2$ with $i\in(T\cup\Lambda^k)^c$ and $\beta_1^{k+1}$ be the largest $\ell_2$ norm of $\|A^{'}[i]r^k\|_2$ with $i\in T-\Lambda^k$  in the $(k+1)$-th iteration of the BOMMP algorithm. Let $W_{k+1}\subseteq(T\cup\Lambda^k)^c$  be a set of $N$ block indices which correspond to
$N$ largest $\ell_2$ norm of $\|A^{'}[i]r^k\|_2$ with $i\in(T\cup\Lambda^k)^c$.

$A_{\Lambda^k}^\dagger$ represents the pseudo-inverse of $A_{\Lambda^k}$ when $A_{\Lambda^k}$ is full column rank ($\sum_{i\in \Lambda^k}d_i\leq m$), i.e., $A_{\Lambda^k}^\dagger=(A_{\Lambda^k}^{'}A_{\Lambda^k})^{-1}A_{\Lambda^k}^{'}$. Moreover, $P_{\Lambda^k}=A_{\Lambda^k}A_{\Lambda^k}^\dagger$ and $P^\bot_{\Lambda^k}=I-P_{\Lambda^k}$ denote two orthogonal projection operators which  project a given vector orthogonally onto the spanned space by all column blocks of $A_{\Lambda^k}$ and onto its orthogonal complement respectively.

First, we recall some useful lemmas in \cite{g18}.
 \begin{lem}\label{lem1}For any $K_1\leq K_2$,
 if the sensing matrix $A$ satisfies the block-RIP of order $K_2$, then $\delta_{K_1}\leq\delta_{K_2}$.
 \end{lem}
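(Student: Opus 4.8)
The plan is to exploit the nesting of the classes of block sparse signals together with the extremal (smallest constant) characterization of the block-RIC. First I would recall that, by definition, $\delta_{K_1}$ is the \emph{smallest} nonnegative number $\delta$ for which
\begin{eqnarray*}
(1-\delta)\|x\|_2^2\leq\|Ax\|_2^2\leq(1+\delta)\|x\|_2^2
\end{eqnarray*}
holds for every block $K_1$-sparse signal $x$ over $\mathcal{I}$, and similarly for $\delta_{K_2}$ with block $K_2$-sparse signals. The hypothesis that $A$ satisfies the block-RIP of order $K_2$ guarantees that $\delta_{K_2}$ is well defined (finite and in $[0,1)$).

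Next I would observe the key inclusion: since $K_1\leq K_2$, any block $K_1$-sparse signal $x$ satisfies $\|x\|_{2,0}\leq K_1\leq K_2$, hence it is also block $K_2$-sparse. Consequently the chain of inequalities above with $\delta=\delta_{K_2}$, which is valid for all block $K_2$-sparse signals, is in particular valid for all block $K_1$-sparse signals. Thus $\delta_{K_2}$ is one admissible constant in the defining family for order $K_1$, and $\delta_{K_1}$, being the smallest such constant, must satisfy $\delta_{K_1}\leq\delta_{K_2}$.

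There is essentially no obstacle here; the only point requiring a modicum of care is to phrase the argument in terms of the minimality of $\delta_{K_1}$ rather than trying to compare the two inequalities pointwise, and to note that the block sparsity level is measured by $\|\cdot\|_{2,0}$ so that the inclusion of signal classes is immediate from $K_1\leq K_2$. This completes the proof.
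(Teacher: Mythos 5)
Your proof is correct: the nesting of the signal classes (every block $K_1$-sparse signal is block $K_2$-sparse when $K_1\leq K_2$) together with the minimality in the definition of the block-RIC immediately gives $\delta_{K_1}\leq\delta_{K_2}$, and this is the standard argument. The paper itself states this lemma without proof, simply citing its reference, so there is nothing to compare against; your argument is exactly the canonical one that reference would use.
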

 \begin{lem}\label{lemm1}
  Let the  sensing matrix $A$ satisfy the block-RIP of order $K$ and $\Gamma$ be a block index set with $|\Gamma|\leq K$. Then
  there is
  \begin{eqnarray*}
  \|A^{'}_\Gamma x\|_2^2\leq(1+\delta_K)\|x\|_2^2
  \end{eqnarray*}
  for any $x\in\R^m$.
 \end{lem}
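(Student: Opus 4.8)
The plan is to reduce the stated bound to the (equivalent) estimate on $A_\Gamma$ via the elementary fact that a matrix and its transpose have the same operator norm, and then to invoke the block-RIP of order $K$. Concretely, I would begin from the variational characterization of the Euclidean norm: for any $x\in\R^m$,
\[
\|A'_\Gamma x\|_2=\max_{z}\ \langle z,\,A'_\Gamma x\rangle=\max_{z}\ \langle A_\Gamma z,\,x\rangle ,
\]
where $z$ ranges over all block vectors over $\mathcal{I}_\Gamma$ with $\|z\|_2\leq 1$. By the Cauchy--Schwarz inequality $\langle A_\Gamma z,\,x\rangle\leq \|A_\Gamma z\|_2\,\|x\|_2$, so it remains only to bound $\|A_\Gamma z\|_2$ uniformly over such $z$.

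The second step is to pass from $A_\Gamma$ back to $A$ by zero-extension. Given a block vector $z$ over $\mathcal{I}_\Gamma$, let $\tilde z_\Gamma$ be its extension to a block vector over $\mathcal I$ as defined in Section \ref{2}; then $A_\Gamma z = A\tilde z_\Gamma$ and $\|\tilde z_\Gamma\|_2=\|z\|_2$. Since $\tilde z_\Gamma$ has at most $|\Gamma|\leq K$ nonzero blocks, it is block $K$-sparse over $\mathcal I$, so the block-RIP of order $K$ applies and gives
\[
\|A_\Gamma z\|_2^2=\|A\tilde z_\Gamma\|_2^2\leq (1+\delta_K)\|\tilde z_\Gamma\|_2^2=(1+\delta_K)\|z\|_2^2\leq 1+\delta_K .
\]
Combining this with the previous display yields $\|A'_\Gamma x\|_2\leq \sqrt{1+\delta_K}\,\|x\|_2$, that is, $\|A'_\Gamma x\|_2^2\leq (1+\delta_K)\|x\|_2^2$, which is the claim.

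I do not expect any serious obstacle: the only point requiring a little care is the reduction step, namely checking that the zero-extension $\tilde z_\Gamma$ is genuinely block $K$-sparse over $\mathcal I$ with $\|\tilde z_\Gamma\|_2=\|z\|_2$, so that the block-RIP of order $K$ (rather than some larger order) can legitimately be invoked. Note that Lemma \ref{lem1} is not needed here, since any block vector with at most $|\Gamma|\leq K$ nonzero blocks already qualifies as block $K$-sparse and the block-RIP of order $K$ covers it directly.
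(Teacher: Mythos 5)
Your proof is correct and complete: the duality step $\|A'_\Gamma x\|_2=\max_{\|z\|_2\leq 1}\langle A_\Gamma z, x\rangle$, Cauchy--Schwarz, and the observation that the zero-extension $\tilde z_\Gamma$ is block $K$-sparse with $\|\tilde z_\Gamma\|_2=\|z\|_2$ together give exactly the claimed bound, and you are right that Lemma \ref{lem1} is not needed. The paper itself offers no proof of this lemma --- it merely recalls it from \cite{g18} --- and your argument is the standard one used there, so there is nothing substantive to contrast.
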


 Next, we will prove the  following lemma that plays an important role during  our analysis. It is rooted in \cite{16'} and \cite{g26}.
 \begin{lem}\label{lem2}
 For any nonempty index subset $W$  and any constants  $S,\ C>0$, let $t=\pm\frac{\sqrt{S+1}-1}{\sqrt{S}}$ and
 \begin{eqnarray}
 t_i=
         -\frac{C}{2}(1-t^2).
\end{eqnarray}
 Then for any vector $h_i\in\R^n$,  we have $t^2<1$ and
 \begin{eqnarray}\label{e1}
 \|A(x+\sum_{i\in W}t_ih_i)\|_2^2-\|A(t^2x-\sum_{i\in W}t_ih_i)\|_2^2=
 (1-t^4)\left(\langle Ax,Ax\rangle-C\sum_{i\in W}\langle Ax, Ah_i\rangle\right).\nonumber
 \end{eqnarray}
 \end{lem}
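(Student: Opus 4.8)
The plan is to treat this as a purely algebraic identity: once we introduce the single auxiliary vector $u:=\sum_{i\in W}t_ih_i$, the statement collapses to expanding two squared norms and collecting terms, with no analysis needed beyond what is already in hand. The only preliminary point is the inequality $t^2<1$, which I would dispose of first. Writing $s:=\sqrt{S+1}>1$ (legitimate since $S>0$), we have $S=s^2-1=(s-1)(s+1)$, hence
$$t^2=\frac{(\sqrt{S+1}-1)^2}{S}=\frac{(s-1)^2}{(s-1)(s+1)}=\frac{s-1}{s+1}\in(0,1),$$
and in particular $1-t^2=\tfrac{2}{s+1}>0$. Since the claimed identity (and the definition of $t_i$) involves only $t^2$ and $t^4$, the sign ambiguity $t=\pm\frac{\sqrt{S+1}-1}{\sqrt{S}}$ is immaterial and needs no further comment.

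For the identity itself, observe that every coefficient $t_i$ is the \emph{same} number $-\frac{C}{2}(1-t^2)$, so $u=-\frac{C}{2}(1-t^2)\sum_{i\in W}h_i$ and therefore $\langle Ax,Au\rangle=-\frac{C}{2}(1-t^2)\sum_{i\in W}\langle Ax,Ah_i\rangle$. Expanding the two squared norms and subtracting,
\begin{align*}
\|A(x+u)\|_2^2-\|A(t^2x-u)\|_2^2
&=\big(\|Ax\|_2^2+2\langle Ax,Au\rangle+\|Au\|_2^2\big)\\
&\quad-\big(t^4\|Ax\|_2^2-2t^2\langle Ax,Au\rangle+\|Au\|_2^2\big)\\
&=(1-t^4)\|Ax\|_2^2+2(1+t^2)\langle Ax,Au\rangle,
\end{align*}
where the two copies of $\|Au\|_2^2$ cancel. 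Substituting the expression for $\langle Ax,Au\rangle$ gives
$$2(1+t^2)\langle Ax,Au\rangle=-C(1+t^2)(1-t^2)\sum_{i\in W}\langle Ax,Ah_i\rangle=-C(1-t^4)\sum_{i\in W}\langle Ax,Ah_i\rangle,$$
so the left-hand side equals $(1-t^4)\big(\|Ax\|_2^2-C\sum_{i\in W}\langle Ax,Ah_i\rangle\big)$, which is exactly the right-hand side after writing $\|Ax\|_2^2=\langle Ax,Ax\rangle$.

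There is essentially no obstacle here; the only thing requiring care is the bookkeeping of cross terms, and the real content of the lemma is the specific choice $t_i=-\frac{C}{2}(1-t^2)$, which is precisely what forces the coefficient of $\sum_{i\in W}\langle Ax,Ah_i\rangle$ to be proportional to $(1-t^4)$ so that the whole expression factors cleanly. The cancellation of the quartic terms coming from $\|Au\|_2^2$ is automatic because the same vector $u$ appears in both squared norms. Accordingly I would present only the $t^2<1$ verification and this one-line expansion.
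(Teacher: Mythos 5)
Your proof is correct and follows essentially the same route as the paper's: verify $t^2<1$ from the explicit formula, expand both squared norms, observe that the quadratic terms in the $h_i$ cancel, and substitute the common value of $t_i$. The only difference is cosmetic — you bundle $\sum_{i\in W}t_ih_i$ into a single vector $u$, which makes the cancellation one line instead of the paper's term-by-term bookkeeping.
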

 \begin{proof}

 For $t=\pm\frac{\sqrt{S+1}-1}{\sqrt{S}}$, it follows that
 \begin{eqnarray}\label{e2}
 t^2=\frac{(\sqrt{S+1}-1)^2}{S}=\frac{\sqrt{S+1}-1}{\sqrt{S+1}+1}<1.\nonumber
 \end{eqnarray}
   By the following chain of equalities and the definition of $t_i\ (i\in W)$, we have that
 \begin{eqnarray}
 &&\|A(x+\sum_{i\in W}t_ih_i)\|_2^2-\|A(t^2x-\sum_{i\in W}t_ih_i)\|_2^2\nonumber\\
 &&=\langle Ax,Ax\rangle+2\sum_{i\in W}t_i\langle  Ax,Ah_i\rangle+2\sum_{i,j\in W,i\neq j}t_it_j\langle  Ah_i,Ah_j\rangle+\sum_{i\in W}t_i^2\langle  Ah_i,Ah_i\rangle\nonumber\\
 &&\ \ -\left(t^4\langle Ax,Ax\rangle-2t^2\sum_{i\in W}t_i\langle  Ax,Ah_i\rangle+2\sum_{i,j\in W,i\neq j}t_it_j\langle Ah_i,Ah_j\rangle+\sum_{i\in W}t_i^2\langle  Ah_i,Ah_i\rangle\right)\nonumber\\
 &&=(1-t^4)\langle Ax,Ax\rangle+2(1+t^2)\sum_{i\in W}t_i\langle  Ax,Ah_i\rangle\nonumber\\
 &&=(1-t^4)\left(\langle Ax,Ax\rangle-\frac{2}{1-t^2}(1-t^2)\frac{C}{2}\sum_{i\in W}\langle  Ax,Ah_i\rangle\right)\nonumber\\
 &&=(1-t^4)\left(\langle Ax,Ax\rangle-C\sum_{i\in W}\langle  Ax,Ah_i\rangle\right).\nonumber
 \end{eqnarray}
This completed the proof of  Lemma \ref{lem2}.
 \end{proof}

\section{ Main results }\label{3}

\subsection{Noiseless case}
\ \

It is clear that if $\beta^{k}_1>\alpha_N^k\ (1\leq k\leq K)$, then  at least one block index of $N$ block indices selected  is correct in every iteration, i.e., the BOMMP makes a success  in this iteration.
The following theorems provide a sufficient condition to guarantee the BOMMP algorithm success.
\begin{thm}\label{the1}
Suppose  $x$ is a block $K$-sparse signal and the sensing matrix $A$ satisfies the block-RIP of $K+N$ order with the block-RIC
\begin{eqnarray}\label{e34}
\delta_{K+N}<\frac{1}{\sqrt{\frac{K}{N}+1}}.
\end{eqnarray}
Then the BOMMP algorithm makes a success in the first iteration.
\end{thm}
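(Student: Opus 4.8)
The plan is to show that in the first iteration the BOMMP selects at least one correct block index, which by the remark preceding the theorem amounts to proving $\beta_1^1 > \alpha_N^1$. Here $r^0 = y = Ax$ (noiseless case), $T = \mathrm{block\text{-}supp}(x)$ with $|T| \le K$, $\beta_1^1 = \max_{i\in T}\|A'[i]Ax\|_2$, and $\alpha_N^1$ is the $N$-th largest among $\{\|A'[i]Ax\|_2 : i \notin T\}$. Suppose for contradiction that $\beta_1^1 \le \alpha_N^1$. Then there is a set $W \subseteq T^c$ with $|W| = N$ such that $\|A'[i]Ax\|_2 \ge \beta_1^1 \ge \|A'[j]Ax\|_2$ for every $i \in W$ and every $j \in T$. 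I would then feed this inequality into Lemma~\ref{lem2} with a judicious choice of the auxiliary vectors $h_i$ and the constants $S, C$.

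The key step is the choice of parameters in Lemma~\ref{lem2}. I would take $S = K/N$, so that $t^2 = \frac{\sqrt{K/N+1}-1}{\sqrt{K/N+1}+1} < 1$, and I would choose the $h_i$ to be (normalized) block-canonical directions associated with the blocks in $W$, i.e.\ $h_i = \tilde e_i$ supported on block $i \in W$ with $\|h_i\|_2 = 1$, chosen so that $\langle Ax, Ah_i\rangle = \pm\|A'[i]Ax\|_2$ — concretely, let $h_i$ be the unit vector in the direction of $(A'[i]Ax)$ placed in block $i$, giving $\langle Ax, Ah_i\rangle = \|A'[i]Ax\|_2$. Then $x + \sum_{i\in W} t_i h_i$ and $t^2 x - \sum_{i \in W} t_i h_i$ are both block sparse of order $\le K + N$, so the block-RIP of order $K+N$ applies to both. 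From Lemma~\ref{lem2}, the left-hand side is bounded below by $(1-t^2)^2(1-\delta_{K+N})\|x\|_2^2 \cdot(\text{something}) - (1+\delta_{K+N})\cdot(\text{something})$ after expanding the norms of the two block-sparse vectors via block-RIP; the right-hand side equals $(1-t^4)\big(\|Ax\|_2^2 - C\sum_{i\in W}\|A'[i]Ax\|_2\big)$. Meanwhile $\|Ax\|_2^2 = \langle A'Ax, x\rangle = \sum_{i\in T}\langle A'[i]Ax, x[i]\rangle \le \beta_1^1 \sum_{i\in T}\|x[i]\|_2 \le \beta_1^1 \sqrt{K}\|x\|_2$ by Cauchy–Schwarz, and each $\|A'[i]Ax\|_2 \ge \beta_1^1$ for $i \in W$ gives $\sum_{i\in W}\|A'[i]Ax\|_2 \ge N\beta_1^1$. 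Choosing $C$ so that the coefficient structure forces a contradiction — I expect $C$ proportional to $\sqrt{K}\|x\|_2 / N$ scaled appropriately — the right-hand side becomes $\le 0$ while the left-hand side, combined with the hypothesis $\delta_{K+N} < 1/\sqrt{K/N+1}$ and the identity $t^2 = $ the above, is shown to be $> 0$.

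The main obstacle is the bookkeeping in the middle step: after invoking block-RIP on the two block-sparse vectors, one must carefully track the cross terms $\|x + \sum t_i h_i\|_2^2 = \|x\|_2^2 + \sum t_i^2$ (the cross terms $\langle x, h_i\rangle$ vanish since $W \cap T = \varnothing$ and the $h_i$ are supported on distinct blocks in $W$) and similarly $\|t^2 x - \sum t_i h_i\|_2^2 = t^4\|x\|_2^2 + \sum t_i^2$, then substitute $t_i = -\frac{C}{2}(1-t^2)$, and verify that the resulting inequality, after dividing by $1-t^4 > 0$, reduces exactly to a statement equivalent to $\delta_{K+N} \ge \frac{1}{\sqrt{K/N+1}}$, contradicting \eqref{e34}. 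The delicate point is that the bound must be \emph{tight} — the whole value of the theorem is sharpness — so the choices of $S = K/N$ and of $C$ must be made so that no slack is introduced; I would expect $C$ to be determined precisely by the requirement that the $\|x\|_2^2$ terms and the $\sum t_i^2$ terms balance against $\beta_1^1\sqrt K\|x\|_2$ and $N\beta_1^1$ respectively. Once the parameters are pinned down, the contradiction is immediate and the proof closes.
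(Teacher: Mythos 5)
Your proposal follows essentially the same route as the paper's proof: Lemma~\ref{lem2} with $S=K/N$ and $C=\sqrt{K}\|x\|_2/N$, the normalized block directions $a_{\{i\}}=A'[i]Ax/\|A'[i]Ax\|_2$ placed in the blocks of $W\subseteq T^c$, the Cauchy--Schwarz bound $\|Ax\|_2^2\leq\beta_1^1\sqrt{K}\|x\|_2$, the vanishing cross terms, and the block-RIP of order $K+N$ applied to the two $(K+N)$-block-sparse vectors, reducing everything to $\frac{1-t^2}{1+t^2}-\delta_{K+N}>0$. The only difference is that you package it as a contradiction ($\beta_1^1\leq\alpha_N^1$ forces the right-hand side of the lemma to be nonpositive) whereas the paper directly derives $\beta_1^1>\alpha_N^1$; this is a cosmetic reorganization, and your parameter choices are exactly the ones the paper uses.
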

\textbf{Remark}\ \textbf{1.} As $N=2$, the bound \eqref{e34} is
\begin{eqnarray*}
\delta_{K+2}<\frac{1}{\sqrt{\frac{K}{2}+1}}
\end{eqnarray*}
for the first iteration of the BOMMP.
In this case, \eqref{geh1} takes the form
\begin{eqnarray*}
  \delta_{K+2}<\frac{1}{1+\sqrt{\frac{K}{2}}}<\frac{1}{\sqrt{\frac{K}{2}+1}},
\end{eqnarray*}
that is,  the sufficient condition \eqref{e34} is weaker than that in \cite{g25} for the first iteration of the BOMMP.

\begin{proof} It is clear that we only need to consider the block $K$-sparse signal $x\neq 0$ in the proof.
Recall the definitions of $W_1$, $\alpha_N^1$ and $\beta_1^1$.  $W_1$ is a set of $N$ block indices which correspond to
$N$ largest $\ell_2$ norm of $\|A^{'}[i]r^k\|_2$ with $i\in T^c$.  $\alpha_N^{1}$ is the $N$-th largest  $\ell_2$ norm of $\|A^{'}[i]r^k\|_2$ with $i\in T^c$. $\beta_1^{1}$ be the largest $\ell_2$ norm of $\|A^{'}[i]r^k\|_2$ with $i\in T$.

Firstly, we consider $\alpha_{N}^{1}>0$, then $\|A^{'}[i]Ax\|_2>0$ for $\forall i\in W_1$.
Hence, we have that
\begin{eqnarray}\label{e4}
\alpha_N^1&=&\min\{\|A^{'}[i]Ax\|_2:i\in W_1\}\nonumber\\
&=&\min\{\langle A^{'}[i]Ax, \frac{A^{'}[i]Ax}{\|A^{'}[i]Ax\|_2}\rangle:i\in W_1\}\nonumber\\
&=&\min\{\langle Ax, A[i]a_{\{i\}}\rangle:i\in W_1\}\nonumber\\
&=&\min\{\langle Ax, A\widetilde{a}_{\{i\}}\rangle:i\in W_1\}\nonumber\\
&\leq&\frac{\sum_{i\in W_1} \langle Ax, A\widetilde{a}_{\{i\}}\rangle}{N},
\end{eqnarray}
where  $a_{\{i\}}=\frac{A^{'}[i]Ax}{\|A^{'}[i]Ax\|_2}$ with $\|a_{\{i\}}\|_2=1$.
It follows from the definition of $\beta_1^1$ and $|T|\leq K$ that
\begin{eqnarray}\label{e5}
\langle Ax, Ax \rangle&=&\langle \sum_{i\in T}A[i]x[i], Ax \rangle \nonumber\\
&=&\sum_{i\in T}\langle x[i], A^{'}[i]Ax \rangle\nonumber\\
&\leq &\sum_{i\in T}\|x[i]\|_2\|A^{'}[i]Ax\|_2\nonumber\\
&\leq&\beta_1^1\|x\|_{2,1}\nonumber\\
&\leq&\beta_1^1\sqrt{K}\|x\|_{2,2}\nonumber\\
&=&\beta_1^1\sqrt{K}\|x\|_{2}.
\end{eqnarray}

Let $t=-\frac{\sqrt{\frac{K}{N}+1}-1}{\sqrt{\frac{K}{N}}}$
and
\begin{eqnarray*}
 t_i=-\frac{\sqrt{K}}{2N}(1-t^2)\|x\|_{2}
\end{eqnarray*}
where $i\in W_1\subseteq T^c$ with $|W_1|=N$. Then we have that
\begin{eqnarray*}
t^2=\frac{\sqrt{\frac{K}{N}+1}-1}{\sqrt{\frac{K}{N}+1}+1}<1
\end{eqnarray*}
and
 \begin{eqnarray}\label{e33}
\sum_{i\in W_1}t_i^2&=&\left(\frac{\sqrt{K}}{2N}(1-t^2)\|x\|_{2}\right)^2N\nonumber\\
&=&\frac{K}{4N}(1-t^2)^2\|x\|_{2}^2\nonumber\\
&=&\frac{K}{4N}\left(1-\frac{\sqrt{\frac{K}{N}+1}-1}{\sqrt{\frac{K}{N}+1}+1}\right)^2\|x\|_{2}^2\nonumber\\
&=&\frac{K}{N}\frac{1}{\left(\sqrt{\frac{K}{N}+1}+1\right)^2}\|x\|_{2}^2\nonumber\\
&=&\frac{\sqrt{\frac{K}{N}+1}-1}{\sqrt{\frac{K}{N}+1}+1}\|x\|_{2}^2=t^2\|x\|_{2}^2.
 \end{eqnarray}
 From \eqref{e4}, \eqref{e5},  Lemma \ref{lem2} and $t^2<1$, it is clear that
\begin{eqnarray}\label{e6}
&&(1-t^4)\sqrt{K}\|x\|_{2}(\beta_1^1-\alpha_N^1)\nonumber\\
&&\geq(1-t^4)\left(\langle Ax, Ax \rangle-\sum_{i\in W_1}\sqrt{K}\|x\|_{2}\frac{\langle Ax,A\widetilde{a}_{\{i\}}\rangle}{N}\right)\nonumber\\
&&=\|A(x+\sum_{i\in W_1}t_i\widetilde{a}_{\{i\}})\|_2^2-\|A(t^2x-\sum_{i\in W_1}t_i\widetilde{a}_{\{i\}})\|_2^2.
\end{eqnarray}
Because the sensing matrix $A$ satisfies  the block-RIP of order $K+N$
with $\delta_{K+N}<\frac{1}{\sqrt{\frac{K}{N}+1}}$, $x\neq 0$ with the block-supp$(x)\subseteq T$ and $\|a_{\{i\}}\|_2=1$ with $i\in W_1\subseteq T^c$, it follows from \eqref{e33} that
\begin{eqnarray*}\label{e7}
&&\|A(x+\sum_{i\in W_1}t_i\widetilde{a}_{\{i\}})\|_2^2-\|A(t^2x-\sum_{i\in W_1}t_i\widetilde{a}_{\{i\}})\|_2^2\nonumber\\
&&\geq(1-\delta_{K+N})\left(\|x+\sum_{i\in W_1}t_i\widetilde{a}_{\{i\}}\|_2^2\right)-(1+\delta_{K+N})\left(\|t^2x-\sum_{i\in W_1}t_i\widetilde{a}_{\{i\}}\|_2^2\right)\nonumber\\
&&=(1-\delta_{K+N})\left(\|x\|_2^2+\sum_{i\in W_1}t_i^2\right)-(1+\delta_{K+N})\left(t^4\|x\|_2^2+\sum_{i\in W_1}t_i^2\right)\nonumber\\
&&=(1-\delta_{K+N})(1+t^2)\|x\|_{2}^2-(1+\delta_{K+N})(t^4+t^2)\|x\|_{2}^2\nonumber\\
&&=(1-t^4)\|x\|_{2}^2-\delta_{K+N}(1+t^2)^2\|x\|_{2}^2\nonumber\\
&&=(1+t^2)^2\|x\|_{2}^2\left(\frac{1-t^2}{1+t^2}-\delta_{K+N}\right)\nonumber\\
&&=(1+t^2)^2\|x\|_{2}^2\left(\frac{1-\frac{\sqrt{\frac{K}{N}+1}-1}{\sqrt{\frac{K}{N}+1}+1}}{1+\frac{\sqrt{\frac{K}{N}+1}-1}{\sqrt{\frac{K}{N}+1}+1}}-\delta_{K+N}\right)\nonumber\\
&&=(1+t^2)^2\|x\|_{2}^2\left(\frac{1}{\sqrt{\frac{K}{N}+1}}-\delta_{K+N}\right)\nonumber\\
&&>0.
\end{eqnarray*}
It follows from the above two  inequalities that $\beta_1^1>\alpha_N^1$,  which represents that the BOMMP algorithm selects  at least one block index from the block support $T$ under $\alpha_N^1>0$. As the above discussion, we have that $\beta_1^1>0$. When
$\alpha_N^1=0$, it is clear that $\beta_1^1>\alpha_N^1$.

As mentioned, if $\delta_{K+N}<\frac{1}{\sqrt{\frac{K}{N}+1}}$, then the BOMMP algorithm makes a success in the first iteration.
\end{proof}
\begin{thm}\label{the2}
Suppose the BOMMP algorithm  has performed $k$ iterations successfully, where $1\leq k<K$. Then the BOMMP algorithm will be successful for the $(k+1)$-th iteration if the sensing matrix $A$ satisfies the block-RIP of order $NK+1$ with the block-RIC $\delta_{NK+1}$ fulfilling
\begin{eqnarray*}
\delta_{NK+1}<\frac{1}{\sqrt{\frac{K}{N}+1}}.
\end{eqnarray*}
\end{thm}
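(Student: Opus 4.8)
The plan is to mimic the structure of the proof of Theorem \ref{the1}, but now working with the residual $r^k = P^{\bot}_{\Lambda^k}y = P^{\bot}_{\Lambda^k}Ax$ in place of $y = Ax$, and with the "remaining" part of the signal $x_{T-\Lambda^k}$ (restricted to those blocks not yet selected) in place of $x$. Since the first $k$ iterations succeeded, $\Lambda^k$ contains at least $k$ correct block indices and at most $Nk$ indices in total, so $|T-\Lambda^k|\leq K-k$ and $|\Lambda^k|\leq Nk$. Write $v = x_{T-\Lambda^k}$, understood (via the tilde notation) as the block vector over $\mathcal I$ supported on $T-\Lambda^k$; then $P^{\bot}_{\Lambda^k}Ax = P^{\bot}_{\Lambda^k}Av$ because $A x_{T\cap\Lambda^k}$ already lies in the column space of $A_{\Lambda^k}$. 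The first step is to record the analogues of \eqref{e4} and \eqref{e5}: define $W_{k+1}\subseteq (T\cup\Lambda^k)^c$ achieving the $N$ largest $\|A'[i]r^k\|_2$, set $a_{\{i\}} = A'[i]r^k/\|A'[i]r^k\|_2$ (assuming first $\alpha_N^{k+1}>0$), and use $\langle A'[i]r^k, x\rangle = \langle r^k, Ax\rangle = \langle r^k, Av\rangle$ (valid since $r^k\perp \mathrm{col}(A_{\Lambda^k})$) to get
\begin{eqnarray*}
\alpha_N^{k+1} \leq \frac{1}{N}\sum_{i\in W_{k+1}} \langle r^k, A\widetilde a_{\{i\}}\rangle,
\qquad
\langle r^k, Av\rangle = \langle Av, Av\rangle \geq \ldots
\end{eqnarray*}
and, via Cauchy–Schwarz over the at most $K-k$ blocks of $T-\Lambda^k$ together with the definition of $\beta_1^{k+1}$,
$\langle r^k, Av\rangle \leq \beta_1^{k+1}\,\|v\|_{2,1}\leq \beta_1^{k+1}\sqrt{K-k}\,\|v\|_2$. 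Note we may replace $\sqrt{K-k}$ by $\sqrt{K}$ using $k\geq 1\Rightarrow K-k < K$, which is what makes the final bound uniform in $k$.

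The second step is to plug into Lemma \ref{lem2} with the data $x \leftarrow v$, $h_i \leftarrow \widetilde a_{\{i\}}$, $S \leftarrow \frac{K}{N}$ (so $t = -\frac{\sqrt{K/N+1}-1}{\sqrt{K/N}}$), and $C \leftarrow \frac{\sqrt{K}}{N}\|v\|_2$, so that $t_i = -\frac{\sqrt{K}}{2N}(1-t^2)\|v\|_2$; but here one must be slightly careful, because unlike in Theorem \ref{the1} the vectors $Av$ and $A\widetilde a_{\{i\}}$ are not directly what appears — the inner products are against $r^k = P^{\bot}_{\Lambda^k}Av$. The clean way around this is to observe $\langle r^k, Aw\rangle = \langle P^{\bot}_{\Lambda^k}Av, Aw\rangle = \langle P^{\bot}_{\Lambda^k}Av, P^{\bot}_{\Lambda^k}Aw\rangle$ for any $w$, and more importantly that one can run the whole Lemma \ref{lem2} identity with $A$ replaced by $B := P^{\bot}_{\Lambda^k}A$; the identity \eqref{e1} is purely algebraic in the bilinear form $\langle B\cdot, B\cdot\rangle$, so it holds verbatim. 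Then the left side of the resulting analogue of \eqref{e6} is $(1-t^4)\sqrt{K}\|v\|_2(\beta_1^{k+1}-\alpha_N^{k+1})$ (after also using $\langle Bv,Bv\rangle = \langle r^k, Av\rangle \geq \langle Av,Av\rangle - 0$, i.e. $\langle Bv, Bv\rangle \leq \langle Av, Av\rangle$ in the right direction — one must track which way this inequality goes; since $\langle Bv,Bv\rangle = \langle Av,Av\rangle - \langle P_{\Lambda^k}Av, P_{\Lambda^k}Av\rangle$, we have $\langle Bv,Bv\rangle \leq \langle Av,Av\rangle$, which is the direction needed so that $\langle Bv,Bv\rangle - C\sum\langle Bv, B\widetilde a_{\{i\}}\rangle$ is bounded below by something controlled by $\beta-\alpha$... this is the point to check carefully).

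The third step is the lower bound on the right-hand side via block-RIP, exactly paralleling the displayed chain after \eqref{e6}: the vectors $v + \sum_{i\in W_{k+1}} t_i\widetilde a_{\{i\}}$ and $t^2 v - \sum_{i\in W_{k+1}} t_i\widetilde a_{\{i\}}$ are block-supported on $(T-\Lambda^k)\cup W_{k+1}$, which has at most $(K-k)+N \leq K-1+N \leq NK+1$... wait — one needs the orders to line up. Here is the subtle accounting: $\|B w\|_2^2 = \|P^{\bot}_{\Lambda^k}Aw\|_2^2$, and the standard trick is that for $w$ block-supported on a set $\Omega$ disjoint from $\Lambda^k$ with $|\Omega\cup\Lambda^k|\leq$ (order), $(1-\delta)\|w\|_2^2 \leq \|P^{\bot}_{\Lambda^k}Aw\|_2^2 \leq (1+\delta)\|w\|_2^2$ does \emph{not} hold directly; rather one applies block-RIP of order $|\Omega| + |\Lambda^k|$ to $A$ on $\mathrm{supp}(w)\cup\Lambda^k$ together with a projection argument (this is a known lemma, e.g. the "$P^{\bot}$ preserves RIP up to the combined order" estimate). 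With $|\Omega|\leq (K-k)+N$ and $|\Lambda^k|\leq Nk$, the combined size is at most $K-k+N+Nk = K + N(k+1) - k$. For this to be $\leq NK+1$ for all $1\leq k < K$, note $K + N(k+1) - k \leq K + NK - k \leq NK + K - 1 < NK + K$... this does not obviously give $NK+1$. I expect the actual argument uses the sharper count that among the $\leq Nk$ indices of $\Lambda^k$ only $k$ are "correct" (in $T$) and the RIP is applied to $A$ restricted to $(\text{supp}) \cup \Lambda^k$ of size $\leq (K-k) + N + Nk$; the author's stated order $NK+1$ forces the bookkeeping $(K-k)+N+Nk \le NK+1$, i.e. $N + Nk - k \le NK+1-K+k$... this needs $N(k+1) \le NK + 1 - K + 2k - ... $ — in short, \textbf{the main obstacle I anticipate is precisely this index-counting step}: showing that the relevant block-supported vectors live in a subspace of "block order" at most $NK+1$ (presumably using $k+1\leq K$ and the fact that $\Lambda^k$ has at most $Nk$ blocks while the newly-relevant part $(T-\Lambda^k)\cup W_{k+1}$ has at most $(K-k)+N$ blocks, so the union relevant for the RIP application has at most $Nk + (K-k) + N = N(k+1) + (K-k) \le NK + (K - k) $; to land at $NK+1$ one then invokes $k \ge K-1$... which only holds for the \emph{last} iteration, so more care — perhaps the $P^{\bot}_{\Lambda^k}$-RIP lemma only charges $|\Lambda^k\cap\text{(relevant)}|$, or one uses a smarter decomposition isolating the $k$ correct blocks already in $\Lambda^k$). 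Once the order is confirmed to be $\leq NK+1$, the rest is the identical algebra: using \eqref{e33} with $\|v\|_2$ in place of $\|x\|_2$ (so $\sum_{i\in W_{k+1}} t_i^2 = t^2\|v\|_2^2$), the RIP lower bound collapses to $(1+t^2)^2\|v\|_2^2\big(\tfrac{1-t^2}{1+t^2} - \delta_{NK+1}\big) = (1+t^2)^2\|v\|_2^2\big(\tfrac{1}{\sqrt{K/N+1}} - \delta_{NK+1}\big) > 0$, forcing $\beta_1^{k+1} > \alpha_N^{k+1}$ when $\alpha_N^{k+1}>0$; the case $\alpha_N^{k+1}=0$ is trivial (one still needs $\beta_1^{k+1}>0$, which follows since $v\neq 0$ and $A_{(T-\Lambda^k)\cup\Lambda^k}$ has full column rank, so $r^k\neq 0$ and hence some $\|A'[i]r^k\|_2>0$ with $i\in T-\Lambda^k$ by the same RIP argument). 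This proves the $(k+1)$-th iteration succeeds, completing the induction.
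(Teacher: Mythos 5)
Your plan has the right general shape (reduce to the residual, apply Lemma \ref{lem2}, then block-RIP), but as written it contains two genuine gaps, both of which you flag yourself and neither of which you close. First, the index counting does work out, but not via the bound you attempted: with $|\Lambda^k|\leq Nk$, $|T-\Lambda^k|\leq K-k$ and $|W_{k+1}|=N$, the relevant block support has size at most $Nk+(K-k)+N=k(N-1)+K+N$, which is nondecreasing in $k$ and hence maximized at $k=K-1$, where it equals $(K-1)(N-1)+K+N=NK+1$ exactly. You instead bounded $N(k+1)\leq NK$ and landed at $NK+K-1$, which is not $\leq NK+1$; the correct grouping closes the count. (The paper's version of this is $Nk+K-l+N\leq NK+1$ with $l=|T\cap\Lambda^k|\geq k$.) Second, and more seriously, your route runs the entire argument through $B=P^{\bot}_{\Lambda^k}A$, so the RIP step requires a two-sided bound on $\|P^{\bot}_{\Lambda^k}Aw\|_2^2$, in particular the lower bound $(1-\delta)\|w\|_2^2\leq\|P^{\bot}_{\Lambda^k}Aw\|_2^2$ for $w$ block-supported off $\Lambda^k$. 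That ``projection preserves RIP up to the combined order'' lemma is true but is neither proved in the paper nor available among its stated lemmas, so your proof is incomplete without supplying and proving it.

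The paper sidesteps both issues with a different decomposition: it writes $r^k=P^{\bot}_{\Lambda^k}A_{T-\Lambda^k}x_{T-\Lambda^k}=A_{T-\Lambda^k}x_{T-\Lambda^k}-A_{\Lambda^k}z_{\Lambda^k}=A_{T\cup\Lambda^k}\omega_{T\cup\Lambda^k}$, absorbing the projection into a new coefficient vector $\omega_{T\cup\Lambda^k}=(x_{T-\Lambda^k},\,-z_{\Lambda^k})$ block-supported on $T\cup\Lambda^k$. Every quadratic form is then $\|A(\cdot)\|_2^2$ evaluated at vectors block-supported on $T\cup\Lambda^k\cup W_{k+1}$, so the plain block-RIP of order $NK+1$ applies directly, with the same count as above and no projected-RIP lemma. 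The price is that the analogue of \eqref{e5} must be derived for $\omega$ rather than for $x_{T-\Lambda^k}$: the paper does this in \eqref{e16} via $\|A^{'}_{T}A_{T\cup\Lambda^k}\omega_{T\cup\Lambda^k}\|_{2,\infty}\geq K^{-1/2}\|A^{'}_{T\cup\Lambda^k}A_{T\cup\Lambda^k}\omega_{T\cup\Lambda^k}\|_{2}$ (using $A^{'}_{\Lambda^k}r^k=0$, so only the at most $K$ blocks in $T$ contribute), giving $\langle A\widetilde{\omega},A\widetilde{\omega}\rangle\leq\sqrt{K}\beta_1^{k+1}\|\widetilde{\omega}\|_2$, after which Lemma \ref{lem2} is applied with $x\leftarrow\widetilde{\omega}_{T\cup\Lambda^k}$ exactly as in Theorem \ref{the1}. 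To repair your version, either prove the projected-RIP lemma you invoke, or switch to the $\omega_{T\cup\Lambda^k}$ device.
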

\begin{proof}
For the BOMMP algorithm, $r^k=P^\perp_{\Lambda^k}y$ is orthogonal to each block of $A_{\Lambda^k}$ then
\begin{eqnarray*}
r^k&=&P^\perp_{\Lambda^k}y\\
&=&P^\perp_{\Lambda^k}A_Tx_T\\
&=&P^\perp_{\Lambda^k}(A_{T-{\Lambda^k}}x_{T-{\Lambda^k}}+A_{T\cap \Lambda^k}x_{T\cap\Lambda^k})\\
&=& P^\perp_{\Lambda^k}A_{T-{\Lambda^k}}x_{T-{\Lambda^k}}\\
&=&A_{T-{\Lambda^k}}x_{T-{\Lambda^k}}-P_{\Lambda^k}A_{T-{\Lambda^k}}x_{T-{\Lambda^k}}\\
&=&A_{T-{\Lambda^k}}x_{T-{\Lambda^k}}-A_{\Lambda^k}z_{\Lambda^k}\\
&=&A_{T\cup \Lambda^k}\omega_{T\cup \Lambda^k},
\end{eqnarray*}
where we used the fact that $P_{\Lambda^k}A_{T-{\Lambda^k}}x_{T-{\Lambda^k}}\in span(A_{\Lambda^k})$, so $P_{\Lambda^k}A_{T-{\Lambda^k}}x_{T-{\Lambda^k}}$ can be written as $A_{\Lambda^k}z_{\Lambda^k}$ for some $z_{\Lambda^k}\in \R^{\sum_{i\in \Lambda_k}d_i}$ and $\omega_{T\cup\Lambda^k}$ is given by
\begin{eqnarray*}
\omega_{T\cup \Lambda^k}=\left(
                              \begin{array}{c}
                                x_{T-\Lambda^k} \\
                                -z_{\Lambda^k}\\
                              \end{array}
                            \right).
\end{eqnarray*}

For the $(k+1)$-th iteration, if $T-\Lambda^k=\varnothing$,  then  $T \subseteq \Lambda^k$. Hence, the original block $K$-sparse signal
$x$ has already been recovered exactly. As  $T-\Lambda^k\neq\varnothing$, then $\omega_{T\cup \Lambda^k}\neq 0$. In the remainder of the proof,
we  consider firstly  $\alpha_N^{k+1}>0$,  then $\|A^{'}[i]r^k\|_2>0$ for $\forall i\in W_{k+1}$.
 We take $a_{\{i\}}=\frac{A^{'}[i]r^k}{\|A^{'}[i]r^k\|_2}=\frac{A^{'}[i]A_{T\cup \Lambda^k}\omega_{T\cup \Lambda^k}}{\|A^{'}[i]A_{T\cup \Lambda^k}\omega_{T\cup \Lambda^k}\|_2}$, then $\|\alpha_{\{i\}}\|_2=1$. In view of the definition of $\alpha_N^{k+1}$, we have that
\begin{eqnarray}\label{e9}
\alpha_N^{k+1}&=&\min\{\|A^{'}[i]r^k\|_2:i\in W_{k+1}\}\nonumber\\
&=&\min\{\langle A^{'}[i]r^k, \frac{A^{'}[i]r^k}{\|A^{'}[i]r^k\|_2}\rangle:i\in W_{k+1}\}\nonumber\\
&=&\min\{\langle r^k, A[i]a_{\{i\}}\rangle:i\in W_{k+1}\}\nonumber\\
&=&\min\{\langle A_{T\cup \Lambda^k}\omega_{T\cup \Lambda^k}, A\widetilde{a}_{\{i\}}\rangle:i\in W_{k+1}\}\nonumber\\
&\leq&\frac{ \sum_{i\in W_{k+1}}\langle A\widetilde{\omega}_{T\cup \Lambda^k}, A\widetilde{a}_{\{i\}}\rangle}{N}.
\end{eqnarray}
Combining the definition of $\beta_1^{k+1}$ with $A'_{\Lambda^k}r^k=0$, we derive that
\begin{eqnarray}\label{e10'}
\beta_1^{k+1}&=&\|A^{'}_{{T-\Lambda^k}}r^k\|_{2,\infty}\nonumber\\
&=&\|[A_{T-\Lambda^k}\ A_{T\cap\Lambda^k}]^{'}A_{T\cup\Lambda^k}\omega_{T\cup\Lambda^k}\|_{2,\infty}\nonumber\\
&=&\|A^{'}_{T}A_{T\cup\Lambda^k}\omega_{T\cup\Lambda^k}\|_{2,\infty}\nonumber\\
&=&\|[A_{T}A_{\Lambda^k-T}]^{'}A_{T\cup\Lambda^k}\omega_{T\cup\Lambda^k}\|_{2,\infty}\nonumber\\
&=&\|A^{'}_{T\cup \Lambda^k}A_{T\cup\Lambda^k}\omega_{T\cup\Lambda^k}\|_{2,\infty}.
\end{eqnarray}
Notice the fact that
\begin{eqnarray}\label{e16}
&&\|A^{'}_{T}A_{T\cup\Lambda^k}\omega_{T\cup\Lambda^k}\|_{2,\infty}\nonumber\\
&&\geq\frac{1}{\sqrt{K}}\|A^{'}_{T}A_{T\cup\Lambda^k}\omega_{T\cup\Lambda^k}\|_{2,2}\nonumber\\
&&=\frac{1}{\sqrt{K}}\|A^{'}_{T\cup\Lambda^k}A_{T\cup\Lambda^k}\omega_{T\cup\Lambda^k}\|_{2,2}\nonumber\\
&&=\frac{1}{\sqrt{K}}\|A^{'}_{T\cup\Lambda^k}A_{T\cup\Lambda^k}\omega_{T\cup\Lambda^k}\|_{2}.
\end{eqnarray}
From  \eqref{e10'} and \eqref{e16},  it follows that
\begin{eqnarray}\label{e11}
\langle A\widetilde{\omega}_{T\cup\Lambda^k}, A\widetilde{\omega}_{T\cup\Lambda^k}\rangle&=&\langle A_{T\cup\Lambda^k}\omega_{T\cup\Lambda^k}, A_{T\cup\Lambda^k}\omega_{T\cup\Lambda^k}\rangle\nonumber\\
&=&\langle A^{'}_{T\cup\Lambda^k}A_{T\cup\Lambda^k}\omega_{T\cup\Lambda^k}, \omega_{T\cup\Lambda^k}\rangle \nonumber\\
&\leq& \|A^{'}_{T\cup\Lambda^k}A_{T\cup\Lambda^k}\omega_{T\cup\Lambda^k}\|_2\|\omega_{T\cup\Lambda^k}\|_2\nonumber\\
&\leq&\sqrt{K}\beta_1^{k+1}\|\omega_{T\cup \Lambda^k}\|_2\nonumber\\
&=&\sqrt{K}\beta_1^{k+1}\|\widetilde{\omega}_{T\cup \Lambda^k}\|_2.
\end{eqnarray}

Similarly to the proof of Theorem \ref{the1},
let $t=-\frac{\sqrt{\frac{K}{N}+1}-1}{\sqrt{\frac{K}{N}}}$ and
\begin{eqnarray*}
 t_i=-\frac{\sqrt{K}}{2N}(1-t^2)\|\widetilde{\omega}_{T\cup \Lambda^k}\|_2,\ \ \ \ i\in W_{k+1}\subseteq (\Lambda^k\cup T)^c.
\end{eqnarray*}
 By \eqref{e9}, \eqref{e11} and Lemma \ref{lem2}, we have that
\begin{eqnarray}\label{e12}
&&(1-t^4)\sqrt{K}\|\widetilde{\omega}_{T\cup \Lambda^k}\|_2(\beta_1^{k+1}-\alpha_N^{k+1})\nonumber\\
&&\geq(1-t^4)\left(\langle A\widetilde{\omega}_{T\cup\Lambda^k}, A\widetilde{\omega}_{T\cup\Lambda^k}\rangle-\sqrt{K}\|\widetilde{\omega}_{T\cup \Lambda^k}\|_2\frac{\sum_{i\in W_{k+1}}\langle A\widetilde{\omega}_{T\cup \Lambda^k},  A\widetilde{a}_{\{i\}}\rangle}{N}\right)\nonumber\\
&&=\|A(\widetilde{\omega}_{T\cup\Lambda^k}+\sum_{i\in W_{k+1}}t_i a_{\{i\}})\|_2^2
-\|A(t^2\widetilde{\omega}_{T\cup\Lambda^k}-\sum_{i\in W_{k+1}}t_ia_{\{i\}})\|_2^2.
\end{eqnarray}
Let $l=|T\cap\Lambda^k|$, then $k\leq l\leq K$ and $Nk+K-l+N\leq NK+1$.  Since $A$ satisfies the block-RIP of order $NK+1$
with the  block-RIC $\delta_{NK+1}$, $\widetilde{\omega}_{{T\cup \Lambda^k}}\neq 0$ with the block-supp$(\widetilde{\omega}_{{T\cup \Lambda^k}})\subseteq T\cup \Lambda^k$ and $\|a_{\{i\}}\|_2=1$ with $i\in W_{k+1}\subseteq (T\cup\Lambda^k)^c$, it follows from Lemma \ref{lem1} and $\sum_{i\in W_{k+1}}t_i^2=t^2\|\widetilde{\omega}_{{T\cup \Lambda^k}}\|_2^2$ that
\begin{eqnarray*}\label{e7}
&&\|A(\widetilde{\omega}_{T\cup\Lambda^k}+\sum_{i\in W_{k+1}}t_i\widetilde{a}_{\{i\}})\|_2^2-\|A(t^2\widetilde{\omega}_{T\cup\Lambda^k}-\sum_{i\in W_{k+1}} t_i\widetilde{a}_{\{i\}})\|_2^2\nonumber\\
&&\geq(1-\delta_{Nk+K-l+N})\left(\|\widetilde{\omega}_{T\cup\Lambda^k}+\sum_{i\in W_{k+1}}t_i\widetilde{a}_{\{i\}}\|_2^2\right)\nonumber\\
&&\ \ \ \ -(1+\delta_{Nk+K-l+N})\left(\|t^2\widetilde{\omega}_{T\cup\Lambda^k}-\sum_{i\in W_{k+1}}t_i\widetilde{a}_{\{i\}}\|_2^2\right)\nonumber\\
&&=(1-\delta_{Nk+K-l+N})\left(\|\widetilde{\omega}_{T\cup\Lambda^k}\|_2^2+\sum_{i\in W_{k+1}}t_i^2\right)\nonumber\\
&&\ \ \ \ -(1+\delta_{Nk+K-l+N})\left(t^4\|\widetilde{\omega}_{T\cup\Lambda^k}\|_2^2+\sum_{i\in W_{k+1}}t_i^2\right)\nonumber\\
&&=(1-\delta_{Nk+K-l+N})(1+t^2)\|\widetilde{\omega}_{T\cup\Lambda^k}\|_2^2-(1+\delta_{Nk+K-l+N})(t^4+t^2)\|\widetilde{\omega}_{T\cup\Lambda^k}\|_2^2\nonumber\\
&&=(1+t^2)^2\|\widetilde{\omega}_{T\cup\Lambda^k}\|_2^2\left(\frac{1-t^2}{1+t^2}-\delta_{Nk+K-l+N}\right)\nonumber\\
&&\geq(1+t^2)^2\|\widetilde{\omega}_{T\cup\Lambda^k}\|_2^2\left(\frac{1-t^2}{1+t^2}-\delta_{NK+1}\right)\nonumber.
\end{eqnarray*}
Combining the fact that
\begin{eqnarray*}\label{e8}
\frac{1-t^2}{1+t^2}=\frac{1}{\sqrt{\frac{K}{N}+1}}
\end{eqnarray*}
with   the condition $\delta_{NK+1}<\frac{1}{\sqrt{\frac{K}{N}+1}}$,  it follows from $t^2<1$ and $\widetilde{\omega}_{T\cup\Lambda^k}\neq 0$ that
\begin{eqnarray*}
 (1-t^4)\sqrt{K}(\beta_1^{k+1}-\alpha_N^{k+1})
&\geq&(1+t^2)^2\left(\frac{1-t^2}{1+t^2}-\delta_{NK+1}\right)\|\widetilde{\omega}_{T\cup\Lambda^k}\|_2\nonumber\\
&\geq&(1+t^2)^2\left(\frac{1}{\sqrt{\frac{K}{N}+1}}-\delta_{NK+1}\right)\|\widetilde{\omega}_{T\cup\Lambda^k}\|_2\nonumber\\
&>&0,
\end{eqnarray*}
i.e., $\beta_1^{k+1}>\alpha_N^{k+1}$,  which ensures that the set $\Lambda^{k+1}$ contains at least one correct block index in the $(k+1)$-th iteration of the BOMMP algorithm under $\alpha_N^{k+1}>0$.
For $\alpha_N^{k+1}=0$, it is obvious that $\beta_1^{k+1}>\alpha_N^{k+1}$ based on $\omega_{T\cup\Lambda^k}\neq 0$.
 We have completed the proof of the theorem.
\end{proof}
Now combining the conditions for success in the first iteration in Theorem \ref{the1} with that in non-initial iterations in Theorem \ref{the2}, we obtain overall sufficient condition to guarantee the perfect recovery of block $K$-sparse signals  via  the BOMMP algorithm  in the following theorem.
\begin{thm}\label{the3}
Suppose  $x$ is a block $K$-sparse signal and the sensing matrix $A$ satisfies the block-RIP of order $NK+1$ with the block-RIC $\delta_{NK+1}$ fulfilling
\begin{eqnarray*}
\delta_{NK+1}<\frac{1}{\sqrt{\frac{K}{N}+1}}.
\end{eqnarray*}
Then the BOMMP algorithm can recover the block sparse signal $x$ exactly from $y=Ax$.
\end{thm}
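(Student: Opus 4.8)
The plan is to derive Theorem \ref{the3} by combining the first-iteration guarantee of Theorem \ref{the1} with the non-initial iteration guarantee of Theorem \ref{the2} through an induction on the iteration counter, and then to verify that once every correct block index has been captured the estimation and output steps return $x$ exactly. A preliminary observation is that the hypothesis of Theorem \ref{the3} already implies that of Theorem \ref{the1}: from $K\geq 1$ and $N\geq 1$ we get $(K-1)(N-1)\geq 0$, i.e. $K+N\leq NK+1$, so Lemma \ref{lem1} gives $\delta_{K+N}\leq\delta_{NK+1}<\frac{1}{\sqrt{K/N+1}}$. Hence Theorem \ref{the1} applies and the first iteration of the BOMMP is successful.

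I would then run the induction on $k$. Suppose that for some $k$ with $1\leq k<K$ the first $k$ iterations have all been successful. If $T-\Lambda^k=\varnothing$, then $T\subseteq\Lambda^k$ and I jump to the last step; otherwise $T-\Lambda^k\neq\varnothing$, and Theorem \ref{the2}, whose hypothesis is precisely the one assumed in Theorem \ref{the3}, guarantees that the $(k+1)$-st iteration is also successful, so $\Lambda^{k+1}$ contains a correct block index. Moreover this correct index is new: as in the proof of Theorem \ref{the2}, $r^k=P^\perp_{\Lambda^k}y$ is orthogonal to every column block of $A_{\Lambda^k}$, so every index in $\Lambda^k$ has zero residual correlation while the newly selected correct index has correlation $\beta_1^{k+1}>0$; hence $|\Lambda^k\cap T|$ increases strictly at each successful iteration. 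Since $|T|\leq K$, after at most $K$ iterations we reach some $k_0$ with $T\subseteq\Lambda^{k_0}$. I would also note that throughout this process $|\Lambda^k|\leq Nk\leq NK<NK+1$, so the block-RIP of order $NK+1$ forces $A_{\Lambda^k}$ to have full column rank.

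Finally I would close the argument at $k=k_0$. Because $T\subseteq\Lambda^{k_0}$, the measurement vector $y=Ax=A_Tx_T$ lies in the column space of $A_{\Lambda^{k_0}}$, so the estimation step produces $\hat{x}_{\Lambda^{k_0}}$ with $A_{\Lambda^{k_0}}\hat{x}_{\Lambda^{k_0}}=y$, whence $r^{k_0}=0$ and the while-loop halts (recall $\epsilon=0$). Since $A_{\Lambda^{k_0}}$ has full column rank, the output step has a unique minimizer, which is forced to be the zero-extension of $x_T$ to the block support $\Lambda^{k_0}$; that is, $\hat{x}=x$. I do not expect a genuinely hard step: the analytic work is already done in Theorems \ref{the1} and \ref{the2}, and what remains is the order bookkeeping — $K+N\leq NK+1$ so that Theorem \ref{the1} is subsumed, and $|\Lambda^k|\leq NK$ so that $A_{\Lambda^k}$ stays within the block-RIP order $NK+1$ — together with the remark that every successful iteration contributes a genuinely new correct block, so that the algorithm terminates within $K$ iterations.
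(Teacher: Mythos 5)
Your proposal is correct and follows essentially the same route as the paper: the paper's own proof consists precisely of the observation $K+N\leq NK+1$ (so that Lemma \ref{lem1} transfers the hypothesis to Theorem \ref{the1}) followed by an appeal to Theorems \ref{the1} and \ref{the2}. You merely make explicit the induction, the fact that each successful iteration captures a genuinely new correct block, and the termination/exact-recovery bookkeeping that the paper leaves implicit, all of which is sound.
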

\begin{proof}
For $N\geq1,\  K\geq1$ and $N\leq \min\{K,\ \frac{m}{K}\}$, then $K+N\leq NK+1$. It follows from Lemma \ref{lem1} that
\begin{eqnarray*}
\delta_{K+N}\leq\delta_{NK+1}<\frac{1}{\sqrt{\frac{K}{N}+1}}.
\end{eqnarray*}
Therefore, under the sufficient condition
$\delta_{NK+1}<\frac{1}{\sqrt{\frac{K}{N}+1}}$,  the BOMMP algorithm can recover perfectly any block $K$-sparse signals  from $y=Ax$ based on  Theorems \ref{the1} and \ref{the2}.
\end{proof}

Next, we prove that the proposed bound $\delta_{NK+1}<\frac{1}{\sqrt{\frac{K}{N}+1}}$ is optimal.
\begin{thm}\label{the4}
For any given $K\in \N^+$, there are a block $K$-sparse signal $x$ and a matrix $A$ satisfying
\begin{eqnarray*}
\delta_{NK+1}=\frac{1}{\sqrt{\frac{K}{N}+1}}
\end{eqnarray*}
such that the BOMMP may fail.
\end{thm}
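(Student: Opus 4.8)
The plan is to defeat the bound \emph{by equality}: to construct an explicit pair $(A,x)$ for which BOMMP is \emph{permitted} to select, at one iteration, only incorrect block indices, so that it never recovers the block support of $x$. It is convenient to take all block lengths $d_i=1$, so that BOMMP is the generalized OMP and block‑RIP is RIP; then $A$ will be an $m\times(NK+1)$ matrix and $x$ the signal with $x_i=1$ on the correct set $T=\{1,\dots,K\}$ and $x_i=0$ elsewhere, $y=Ax$. The remaining $NK+1-K=(K-1)(N-1)+N$ columns are the ``incorrect'' ones, split into a pool of $(K-1)(N-1)$ columns consumed during iterations $1,\dots,K-1$ and a distinguished set $W$ of $N$ columns. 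The target behaviour is that BOMMP run on $y$ adds exactly one correct and $N-1$ incorrect block indices at each of the first $K-1$ iterations, so that $|T\cap\Lambda^{K-1}|=K-1$, $|\Lambda^{K-1}|=(K-1)N$, and exactly one correct index (say $K$) is still missing; then at the $K$‑th (final) iteration the $N$ members of $W$ are precisely the indices outside $T\cup\Lambda^{K-1}$, and one arranges $\beta_1^{K}=\alpha_N^{K}$, so BOMMP may take $W$ and miss block $K$. The column set active in that last identification step, $(T\cup\Lambda^{K-1})\cup W$, has cardinality exactly $K+(K-1)(N-1)+N=NK+1$: this is where the order $NK+1$ enters.

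The heart of the construction is to run the estimate chain in the proof of Theorem \ref{the2} backwards, with every inequality an equality, at the $K$‑th iteration. There, with $S=K/N$, $t=-\frac{\sqrt{\frac{K}{N}+1}-1}{\sqrt{\frac{K}{N}}}$ and $t_i=-\frac{\sqrt{K}}{2N}(1-t^2)\|\widetilde{\omega}_{T\cup\Lambda^{K-1}}\|_2$, Lemma \ref{lem2} equates $\|A(\widetilde{\omega}_{T\cup\Lambda^{K-1}}+\sum_{i\in W}t_i\widetilde{a}_{\{i\}})\|_2^2-\|A(t^2\widetilde{\omega}_{T\cup\Lambda^{K-1}}-\sum_{i\in W}t_i\widetilde{a}_{\{i\}})\|_2^2$ to a positive multiple of $\sqrt{K}\|\widetilde{\omega}_{T\cup\Lambda^{K-1}}\|_2(\beta_1^{K}-\alpha_N^{K})$, and then RIP is applied to the two block‑sparse vectors in that identity. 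For exact tightness I would require: (a) the Gram matrix $G=A^{'}A$ to act as multiplication by $1+\frac{1}{\sqrt{\frac{K}{N}+1}}$ on $\widetilde{\omega}_{T\cup\Lambda^{K-1}}+\sum_{i\in W}t_i\widetilde{a}_{\{i\}}$ and by $1-\frac{1}{\sqrt{\frac{K}{N}+1}}$ on $t^2\widetilde{\omega}_{T\cup\Lambda^{K-1}}-\sum_{i\in W}t_i\widetilde{a}_{\{i\}}$ (two orthogonal vectors spanning a two‑dimensional $G$‑invariant subspace), and as the identity on its orthogonal complement, so that $\delta_{NK+1}(A)=\frac{1}{\sqrt{\frac{K}{N}+1}}$; (b) the Cauchy--Schwarz / averaging steps \eqref{e9}, \eqref{e16}, \eqref{e11} to be tight, which forces $\|A^{'}[i]r^{K-1}\|_2$ to be constant over $i\in W$, forces the blocks of $A^{'}_{T}A_{T\cup\Lambda^{K-1}}\omega_{T\cup\Lambda^{K-1}}$ to have equal norms, and forces the requisite collinearity. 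Conditions (a)--(b) pin $G$ down on the active subspace and yield $\beta_1^{K}=\alpha_N^{K}$; the remaining entries of $G$ stay free and are chosen to keep $A$ valid, e.g.\ so that $\delta_{r}(A)<\frac{1}{\sqrt{\frac{K}{N}+1}}$ for every order $r\leq NK-N+2$.

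It then remains to verify the first $K-1$ iterations. Since $\delta_{NK-N+2}(A)<\frac{1}{\sqrt{\frac{K}{N}+1}}$, Theorems \ref{the1}--\ref{the2} already force each of those iterations to select \emph{at least} one correct block index; the incorrect columns are tuned so that the strict ordering of the correlations $\|A^{'}[i]r^{j-1}\|_2$ makes exactly one correct index, together with a prescribed block of $N-1$ incorrect ones, chosen at each step $j\leq K-1$. After iteration $K-1$ one is exactly in the configuration above. At iteration $K$ the stopping rule $k<\min\{K,m/K\}$ still permits the step, $\beta_1^{K}=\alpha_N^{K}$, so BOMMP may add the $N$ indices of $W$; then $T\not\subseteq\Lambda^{K}$ while $r^{K}\neq0$, and since no further iteration is performed the output $\hat{x}$, whose block support lies in $\Lambda^{K}$, cannot equal $x$. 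Hence $(A,x)$ attains $\delta_{NK+1}=\frac{1}{\sqrt{\frac{K}{N}+1}}$ with BOMMP failing, so the bound of Theorems \ref{the1}--\ref{the3} cannot be relaxed. When $N=1$ this reduces to a sharp counterexample for the BOMP (here $NK+1=K+1$ and $x$ has full support $K$); the new content is the multi‑selection and multi‑iteration bookkeeping.

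The main obstacle is realizing (a)--(b) by an \emph{actual} matrix: producing one positive semidefinite $G$ whose two extreme eigenvalues equal exactly $1\pm\frac{1}{\sqrt{\frac{K}{N}+1}}$, whose associated two‑dimensional eigenspace is aligned with the extremal vectors supplied by Lemma \ref{lem2} (so that $\beta_1^{K}=\alpha_N^{K}$, and not a strict inequality one way or the other), and which simultaneously forces the first $K-1$ identification steps to pick precisely one correct block each. Checking that all these constraints are mutually consistent — that the column norms and cross‑correlations forced on the active subspace leave enough freedom to complete $G$ without spoiling $\delta_{r}(A)<\frac{1}{\sqrt{\frac{K}{N}+1}}$ for $r\leq NK-N+2$ — is the technical core of the argument.
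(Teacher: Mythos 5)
There is a genuine gap: your argument never produces the matrix. Everything rests on the existence of a Gram matrix $G=A^{'}A$ satisfying your conditions (a)--(b) simultaneously with the constraints that force the prescribed behaviour of the first $K-1$ iterations, and you yourself defer this consistency check as ``the technical core of the argument.'' That core is precisely what a proof of Theorem \ref{the4} must supply. The difficulty is not cosmetic: the extremal vectors on which you require $G$ to act with eigenvalues $1\pm\frac{1}{\sqrt{\frac{K}{N}+1}}$ involve $a_{\{i\}}=A^{'}[i]r^{K-1}/\|A^{'}[i]r^{K-1}\|_2$, and $r^{K-1}$ is itself determined by $A$ and by the (also $A$-dependent) selections of the first $K-1$ iterations, so the specification of $G$ is self-referential. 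On top of that you need $\delta_{NK+1}(A)$ to equal the bound exactly while $\delta_{NK-N+2}(A)$ stays strictly below it, and you need the incorrect columns to be ``tuned'' so that exactly one correct index is chosen per early iteration. None of these requirements is shown to be jointly realizable, so as it stands the proposal is a programme, not a proof.

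It is worth contrasting this with the paper's route, which sidesteps all of the multi-iteration bookkeeping by arranging the failure at the \emph{first} iteration. The paper writes down an explicit $(NK+1)d\times(NK+1)d$ matrix $A(d)$ whose Gram matrix has a simple block structure; a determinant computation gives its spectrum exactly, showing $\delta_{NK+1}=\frac{1}{\sqrt{\frac{K}{N}+1}}$, and for the signal $x$ equal to $e_1$ on each of the first $K$ blocks one computes directly that $\|A^{'}[i]Ax\|_2=\frac{K}{K+N}$ both for the $K$ correct blocks and for $N$ designated incorrect blocks, i.e.\ $\beta_1^1=\alpha_N^1$, so the identification step may return only incorrect indices. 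Since the counterexample only needs the selection rule to be \emph{permitted} to go wrong once, there is no need to thread equality through all $K$ iterations of the Theorem \ref{the2} chain; your last-iteration strategy makes the construction strictly harder without buying anything for the sharpness claim. If you want to salvage your approach, the honest minimal fix is to abandon the last-iteration scenario and exhibit a concrete $G$ realizing the tie at iteration one — at which point you will essentially have rediscovered the paper's matrix.
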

In order to prove Theorem \ref{the4}, for a   positive integer $d$,  we firstly investigate the following  matrix $A(d)\in \R^{(NK+1)d\times (NK+1)d}$.
\begin{eqnarray*}
A(d)=
\begin{pmatrix}
  \ & \  & \  & 0 & \cdots & 0 & \frac{1}{b}I_d & \cdots & \frac{1}{b}I_d \\
  \  & \sqrt{\frac{K}{K+N}}I_{dK} & \  & \vdots & \vdots & \vdots & \vdots & \vdots & \vdots\\
  \  & \ & \  & 0 & \cdots & 0 & \frac{1}{b}I_d& \cdots & \frac{1}{b}I_d \\
  0 & \cdots & 0 & \  & \  & \  & 0 & \cdots& 0 \\
  \vdots & \vdots & \vdots & \  & I_{d(NK+1-N-K)} & \  & \vdots & \vdots & 0\\
  0 & \cdots & 0 & \  & \  & \  & 0 & \cdots & 0 \\
  0 & \cdots & 0& 0 & \cdots & 0  & \  & \  & \  \\
  \vdots & \vdots & \vdots & \vdots & \vdots & \vdots & \  & I_{dN} & \  \\
  0 & \cdots & 0& 0 & \cdots & 0 & \  & \  & \  \\
\end{pmatrix},
\end{eqnarray*}
where $b=\sqrt{K(K+N)}$. Then we have that
\begin{eqnarray*}
A'(d)A(d)=\begin{pmatrix}
  \ & \  & \  & 0 & \cdots & 0 & sI_d & \cdots & s I_d\\
  \  & \frac{K}{K+N}I_{dK} & \  & \vdots & \vdots & \vdots & \vdots & \vdots & \vdots\\
  \  & \ & \  & 0 & \cdots & 0 & sI_d & \cdots & sI_d \\
  0 & \cdots & 0 & \  & \  & \  &  0 & \cdots& 0 \\
  \vdots & \vdots & \vdots & \  & I_{d(NK+1-N-K)} & \  & \vdots &\vdots &  \vdots\\
  0 & \cdots & 0 & \  & \  & \  & 0 & \cdots & 0 \\
  s I_d& \cdots & sI_d & 0 & \cdots & 0 & (1+s)I_d & \cdots &sI_d\\
  \vdots & \vdots & \vdots & \vdots & \vdots & \vdots & \vdots& \ddots & \vdots \\
  sI_d & \cdots & sI_d& 0 & \cdots & 0 & sI_d& \cdots&(1+s)I_d \\
\end{pmatrix},
\end{eqnarray*}
where $s=\frac{1}{K+N}$. By elementary transformation of determinant, we have that
\begin{eqnarray}\label{gh1}
&&\begin{vmatrix}
 A'(d)A(d)-\lambda I_{(NK+1)d} \\
\end{vmatrix}\nonumber\\
&&=\begin{pmatrix}
  \ & \  & \  & 0 & \cdots & 0 & sI_d & \cdots & Ns I_d\\
  \  & \ & \  & 0 & \cdots & 0 & 0 & \cdots & 0 \\
  \  & s_1I_{dK} & \  & \vdots & \vdots & \vdots & \vdots & \vdots & \vdots\\
  \  & \ & \  & 0 & \cdots & 0 & 0 & \cdots & 0 \\
  0 & \cdots & 0 & \  & \  & \  & 0 & \cdots& 0 \\
  \vdots & \vdots & \vdots & \  & s_2I_{d(NK+1-N-K)} & \  & \vdots & \vdots & 0\\
  0 & \cdots & 0 & \  & \  & \  & 0 & \cdots & 0 \\
  0& \cdots & 0 & 0 & \cdots & 0 & \  & \cdots &0\\
  \vdots & \vdots & \vdots & \vdots & \vdots & \vdots &\  & s_2I_{(N-1)d} & \vdots \\
  0& \cdots & 0 & 0 & \cdots & 0 & \   &  &0\\
  KsI_d & \cdots & sI_d& 0 & \cdots & 0 & sI_d& \cdots&s_3I_d \\
\end{pmatrix}
\end{eqnarray}
where $s_1=\frac{K}{K+N}-\lambda,\ s_2=1-\lambda$ and $s_3=1+\frac{N}{K+N}-\lambda$. Next, we claim that
\begin{eqnarray}\label{gh2}
  &&\begin{vmatrix}
 A'(d)A(d)-\lambda I_{(NK+1)d} \\
\end{vmatrix}\nonumber\\
&&=(1-\lambda)^{d(NK-K)}(\frac{K}{K+N}-\lambda)^{d(K-1)}(\lambda^2-2\lambda+\frac{K}{K+N})^{d}.
\end{eqnarray}
By inductive mwthod, we prove the above claim \eqref{gh2}.
As $d=1$, by direct calculation,  it follows from \eqref{gh1} that
\begin{eqnarray*}
&&\begin{vmatrix}
  A'(1)A(1)-\lambda I_{NK+1} \\
\end{vmatrix}=(1-\lambda)^{NK-K}(\frac{K}{K+N}-\lambda)^{K-1}(\lambda^2-2\lambda+\frac{K}{K+N}).
\end{eqnarray*}
For $d-1(d\geq 2)$,
suppose
\begin{eqnarray*}
&&\begin{vmatrix}
  A'(d-1)A(d-1)-\lambda I_{(d-1)(NK+1)} \\
\end{vmatrix}\\
&&=(1-\lambda)^{(d-1)(NK-K)}(\frac{K}{K+N}-\lambda)^{(d-1)(K-1)}(\lambda^2-2\lambda+\frac{K}{K+N})^{(d-1)}.
\end{eqnarray*}
For $d\geq2$, we expand the determinant \eqref{gh1} by the first column, then expand the remaining determinant by the first row of $s_1I_d$, $s_2I_d$ and $s_3I_d$.
 Hence, we have that
 \begin{eqnarray*}
&&\begin{vmatrix}
  A'(d)A(d)-\lambda I \\
\end{vmatrix}_{d(NK+1)}\\
&&=(-1)^{(1+1)}(\frac{K}{K+N}-\lambda)\left((-1)^{(d-1)+1+(d-1)+1}(\frac{K}{K+N}-\lambda)\cdots\right.
\\
&&\ \ \ \ \ (-1)^{(K-1)(d-1)+1+(K-1)(d-1)+1}(\frac{K}{K+N}-\lambda)
(-1)^{K(d-1)+1+K(d-1)+1}(1-\lambda)\cdots\\
&&\ \ \ \ \ (-1)^{(NK-1)(d-1)+1+(NK-1)(d-1)+1}(1-\lambda)(-1)^{NK(d-1)+1+NK(d-1)+1}(1+\frac{N}{K+N}-\lambda)\\
&&\ \ \ \ \ \left.\begin{vmatrix}
  A'(d-1)A(d-1)-\lambda I_{(d-1)(NK+1)} \\
\end{vmatrix}\right)\\
&&\ \ \ +(-1)^{dNK+1+1}\frac{K}{K+N}\left((-1)^{d+1+(d-1)+1}(\frac{K}{K+N}-\lambda)\cdots\right.\\
&&\ \ \ \ \ (-1)^{d+(K-2)(d-1)+1+(K-1)(d-1)+1}(\frac{K}{K+N}-\lambda)
(-1)^{d+(K-1)(d-1)+1+K(d-1)+1}(1-\lambda)\cdots\\
&&\ \ \ \ \ (-1)^{d+(NK-2)(d-1)+1+(NK-1)(d-1)+1}(1-\lambda)(-1)^{NK(d-1)+1+1}\frac{N}{K+N}\\
&&\ \ \ \ \ \left.\begin{vmatrix}
  A'(a)A(a)-\lambda I_{(d-1)(NK+1)} \\
\end{vmatrix}\right)\\
&&=(\frac{K}{K+N}-\lambda)\left((\frac{K}{K+N}-\lambda)^{K-1}
(1-\lambda)^{NK+1-K-N+N-1}(1+\frac{N}{K+N}-\lambda)\right.\\
&&\ \ \ \ \ \left.\begin{vmatrix}
  A'(d-1)A(d-1)-\lambda I_{(d-1)(NK+1)} \\
\end{vmatrix}\right)\\
&&\ \ \ -\frac{K}{K+N}\left((\frac{K}{K+N}-\lambda)^{K-1}(1-\lambda)^{NK+1-K-N+N-1}\frac{N}{K+N}\right.\\
&&\ \ \ \ \ \left.\begin{vmatrix}
  A'(d-1)A(d-1)-\lambda I_{(d-1)(NK+1)} \\
\end{vmatrix}\right)\\
&&=(1-\lambda)^{d(NK-K)}(\frac{K}{K+N}-\lambda)^{d(K-1)}(\lambda^2-2\lambda+\frac{K}{K+N})^{d}.
\end{eqnarray*}
Therefore, we have completed the proof of the claim \eqref{gh2}.

Now, we present the proof of Theorem \ref{the4}.
\begin{proof}
 For convenience, we assume that the block $K$-sparse signal $x$ consists of $NK+1$ blocks each having identical length of $d$, i.e., $n=(NK+1)d$.
For any given positive integer $K$, let $A=A(d)$.
By \eqref{gh2}, it is clear that
$\frac{K}{K+N}$, $1$, $1-\frac{1}{\sqrt{\frac{K}{N}+1}}$ and $1+\frac{1}{\sqrt{\frac{K}{N}+1}}$ are eigenvalues of $A'A$ with multiplicity of $d(K-1)$, $d(NK-K)$, $d$ and $d$ respectively.  Moreover, $1-\frac{1}{\sqrt{\frac{K}{N}+1}}$ and $1+\frac{1}{\sqrt{\frac{K}{N}+1}}$ are the minimum and maximum eigenvalue  of $A'A$  respectively.

So for $\forall x \in \R^{(NK+1)d}$, we easily derive that
\begin{eqnarray*}
(1-\frac{1}{\sqrt{\frac{K}{N}+1}})\|x\|_2^2\leq x'A'Ax\leq (1+\frac{1}{\sqrt{\frac{K}{N}+1}})\|x\|_2^2,
\end{eqnarray*}
i.e.,
\begin{eqnarray*}
(1-\frac{1}{\sqrt{\frac{K}{N}+1}})\|x\|_2^2\leq \|Ax\|_2^2\leq (1+\frac{1}{\sqrt{\frac{K}{N}+1}})\|x\|_2^2,
\end{eqnarray*}
Therefore, we have that
\begin{eqnarray*}
 \delta_{NK+1}\leq \frac{1}{\sqrt{\frac{K}{N}+1}}.
 \end{eqnarray*}
Next, we claim that the matrix $A$ satisfies the block-RIP of order $NK + 1$ with the block-RIC
\begin{eqnarray*}
 \delta_{NK+1}=\frac{1}{\sqrt{\frac{K}{N}+1}}.
 \end{eqnarray*}
Let $h\in \R^{NK+1}$ be the eigenvector of $A'(1)A(1)$ corresponding to the eigenvalue $1+\frac{1}{\sqrt{\frac{K}{N}+1}}$ and $x\in \R^{(NK+1)d}$ with
$x[i]=h_ie_1$($e_1\in \R^{d}$ is the first coordinate unit vector) for $1\leq i\leq NK+1$. Then we obtain that
\begin{eqnarray*}
  x'A'Ax=h'A'(1)A(1)h=(1+\frac{1}{\sqrt{\frac{K}{N}+1}})\|h\|_2^2=(1+\frac{1}{\sqrt{\frac{K}{N}+1}})\|x\|_2^2.
\end{eqnarray*}
Therefore $A$ satisfies the block-RIC $\delta_{NK+1}=\frac{1}{\sqrt{\frac{K}{N}+1}}$.

 Consider the block  $K$-sparse signal $x=(e_1,e_1,\cdots,e_1,0\cdots,0)'\in\R^{(NK+1)d}$, i.e., $T=$block-supp$(x)=\{1,2,\cdots,K\}$.
For the first iteration, there are
\begin{eqnarray}\label{eq1}
 \| A'[i]r^0\|_2=\| A'[i]Ax\|_2=\left\{
                                                                \begin{array}{ll}
                                                                  \frac{K}{K+N}, & \hbox{$i\in T$;} \\
                                                                  0, & \hbox{$i\in \{K+1, \cdots, NK+1-N\}$;} \\
                                                                  \frac{K}{K+N}, & \hbox{$i\in \{NK+2-N, \cdots, NK+1\}$.}
                                                                \end{array}
                                                              \right.
\end{eqnarray}

Therefore, it follows from the definitions of $\beta_1^1$ and $\alpha_N^1$ and \eqref{eq1} that $\beta_1^1=\frac{K}{K+N}$ and $\alpha_N^1=\frac{K}{K+N}$, that is, $\beta_1^1=\alpha_N^1$.
 This implies the BOMMP may fail to identify at least one correct index in the first iteration. So the BOMMP algorithm may fail for the given matrix $A$ and the block $K$-sparse  signal $x$.
 \end{proof}

\subsection{Noise case}
\ \

In the subsection, we show that  a high order block-RIP condition can guarantee stable and robust recovery
of all block $K$-sparse signals in bounded $\ell_2$ noise setting via the BOMMP algorithm from $y=Ax+e$.
 A sufficient condition in terms of the block-RIC $\delta_{NK+1}$ and the minimum $\ell_2$ norm of  nonzero blocks of block $K$-sparse signals $x$ is described as follow.
\begin{thm}\label{the5}
Suppose $\|e\|_2\leq\varepsilon$ and  the sensing matrix $A$ satisfies a high order block-RIP with the block-RIC
\begin{eqnarray}\label{e13}
\delta_{NK+1}<\frac{1}{\sqrt{\frac{K}{N}+1}}.
\end{eqnarray}
Then the BOMMP algorithm  with the stopping rule $\|r^k\|_2\leq\varepsilon$ recovers exactly the correct support of  block $K$-sparse signals $x$
if  all the nonzero blocks $x[i]$ satisfy
\begin{eqnarray}\label{e14}
\|x[i]\|_2>\max\bigg\{\frac{\frac{\sqrt{2 K(1+\delta_{NK+1})}\varepsilon}{\sqrt{\frac{K}{N}+1}}}{\frac{1}{\sqrt{\frac{K}{N}+1}}-\delta_{NK+1}},\frac{2\varepsilon}
{\sqrt{1-\delta_{NK+1}}}\bigg\}.
\end{eqnarray}
\end{thm}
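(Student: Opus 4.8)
The plan is to prove two facts about the run of the algorithm and then combine them by an induction on the iteration counter, in the same way Theorem~\ref{the3} is deduced from Theorems~\ref{the1}--\ref{the2}. The first is a \emph{progress} statement: at every iteration $k+1$ with $T\setminus\Lambda^k\neq\varnothing$ one has $\beta_1^{k+1}>\alpha_N^{k+1}$, so the identification step adds at least one index of $T$ to $\Lambda^k$. The second is \emph{correct stopping}: the rule $\|r^k\|_2\le\varepsilon$ fires precisely when $T\subseteq\Lambda^k$. Granting these, progress forces $|\Lambda^k\cap T|\ge k$ while the loop runs, so after at most $K$ iterations $T\subseteq\Lambda^k$, and correct stopping then halts the algorithm with the support of $x$ captured in $\Lambda^k$.

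For \emph{progress} I would reuse the proof of Theorem~\ref{the2} up to the decomposition $r^k=P^\perp_{\Lambda^k}y=A_{T\cup\Lambda^k}\widetilde{\omega}_{T\cup\Lambda^k}+P^\perp_{\Lambda^k}e$, noting that $\widetilde{\omega}_{T\cup\Lambda^k}$ carries $x_{T\setminus\Lambda^k}$ on its support coordinates, whence $\|\widetilde{\omega}_{T\cup\Lambda^k}\|_2\ge\|x_{T\setminus\Lambda^k}\|_2\ge\min_{i\in T}\|x[i]\|_2$, and that the counting $Nk+K-l+N\le NK+1$ of Theorem~\ref{the2} gives $|T\cup\Lambda^k|\le NK+1$, so the order-$NK+1$ block-RIP and Lemma~\ref{lemm1} apply to $A_{T\cup\Lambda^k}$ and to each block $A[i]$. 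The first summand $A_{T\cup\Lambda^k}\widetilde{\omega}_{T\cup\Lambda^k}$ is exactly the residual a noiseless run would produce from $\Lambda^k$, so the Lemma~\ref{lem2}/block-RIP chain of Theorem~\ref{the2} applied to $\widetilde{\omega}_{T\cup\Lambda^k}$ still contributes the noiseless gap $c\big(\frac{1}{\sqrt{\frac{K}{N}+1}}-\delta_{NK+1}\big)\|\widetilde{\omega}_{T\cup\Lambda^k}\|_2$, where $c=\frac{1}{\sqrt{K}}\sqrt{\frac{K}{N}+1}>0$ and the parenthesis is positive by \eqref{e13}. The second summand has norm $\le\varepsilon$, and via Lemma~\ref{lemm1} it perturbs each $\|A'[i]r^k\|_2$ --- hence $\beta_1^{k+1}$, $\alpha_N^{k+1}$, and the inequality $\langle A\widetilde{\omega}_{T\cup\Lambda^k},A\widetilde{\omega}_{T\cup\Lambda^k}\rangle\le\sqrt{K}\,\beta_1^{k+1}\|\widetilde{\omega}_{T\cup\Lambda^k}\|_2$ used in Theorem~\ref{the2} --- by amounts controlled by $\sqrt{1+\delta_{NK+1}}\,\varepsilon$. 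Feeding these corrections carefully through the quadratic identity of Lemma~\ref{lem2} one reaches
\[
\beta_1^{k+1}-\alpha_N^{k+1}\ \ge\ c\Big(\frac{1}{\sqrt{\frac{K}{N}+1}}-\delta_{NK+1}\Big)\|\widetilde{\omega}_{T\cup\Lambda^k}\|_2-\sqrt{2(1+\delta_{NK+1})}\,\varepsilon ,
\]
and the first term of \eqref{e14} is exactly the lower bound on $\min_{i\in T}\|x[i]\|_2$ that makes the right-hand side strictly positive, i.e.\ $\beta_1^{k+1}>\alpha_N^{k+1}$. (The first iteration is the case $\Lambda^0=\varnothing$, where order $K+N\le NK+1$ suffices; the case $\alpha_N^{k+1}=0$ is immediate from $\widetilde{\omega}_{T\cup\Lambda^k}\neq 0$.)

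For \emph{correct stopping}: while $T\setminus\Lambda^k\neq\varnothing$, picking any $i_0\in T\setminus\Lambda^k$ gives
\[
\|r^k\|_2\ \ge\ \|A_{T\cup\Lambda^k}\widetilde{\omega}_{T\cup\Lambda^k}\|_2-\varepsilon\ \ge\ \sqrt{1-\delta_{NK+1}}\,\|x[i_0]\|_2-\varepsilon\ >\ \varepsilon
\]
by the second term of \eqref{e14}, so the loop does not stop; and once $T\subseteq\Lambda^k$ one has $P^\perp_{\Lambda^k}Ax=P^\perp_{\Lambda^k}A_Tx_T=0$, hence $r^k=P^\perp_{\Lambda^k}e$ and $\|r^k\|_2\le\|e\|_2\le\varepsilon$, so the loop stops. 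This is the asserted equivalence, and the two facts together finish the proof.

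The main obstacle is the progress step: one must locate the few places where the noise vector $P^\perp_{\Lambda^k}e$ enters the estimates of Theorem~\ref{the2} and recombine them inside the quadratic identity of Lemma~\ref{lem2} so that the net perturbation of $\beta_1^{k+1}-\alpha_N^{k+1}$ comes out as $\sqrt{2(1+\delta_{NK+1})}\,\varepsilon$ and not something larger --- a termwise use of the triangle inequality would force a hypothesis stronger than \eqref{e14}. Verifying $|T\cup\Lambda^k|\le NK+1$ at every iteration, so that the order-$NK+1$ block-RIP is legitimately available throughout, is the accompanying (routine) bookkeeping.
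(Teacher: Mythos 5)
Your proposal is correct and follows essentially the same route as the paper: the same decomposition $r^k=A_{T\cup\Lambda^k}\omega_{T\cup\Lambda^k}+(I-P_{\Lambda^k})e$, the same application of Lemma~\ref{lem2} with the identical choice of $t$ and $t_i$, the same device of merging the two noise maxima into a two-block submatrix to get the $\sqrt{2(1+\delta_{NK+1})}\,\varepsilon$ perturbation (rather than a lossy termwise bound), and the same two-case stopping analysis driven by the second term of \eqref{e14}. The constant $c=\sqrt{\tfrac{K}{N}+1}/\sqrt{K}$ and the resulting threshold match the first term of \eqref{e14} exactly, so nothing further is needed.
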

\begin{proof} Use mathematical induction method to prove the theorem. Suppose the BOMMP performed $k(1\leq k\leq K-1)$ iterations successfully.
Now considering the $(k+1)$-th iteration, we have that
\begin{eqnarray*}
r^k&=&P^\perp_{\Lambda^k}y\\
&=&P^\perp_{\Lambda^k}A_{T}x_{T}+P_{\Lambda^k}^\perp e\\
&=&A_{T\cup \Lambda^k}\omega_{T\cup \Lambda^k}+(I-P_{\Lambda^k})e
\end{eqnarray*}
for some $\omega_{T\cup \Lambda^k}$ as in the proof of Theorem \ref{the2}.
One consider the following two cases.
\begin{itemize}
\item Case $1$: $T-\Lambda^k=\varnothing$
\end{itemize}

This implies  $T\subseteq\Lambda^k$.
Then  the correct support $T$ of the original block $K$-sparse signal $x$ has already been  recovered.
\begin{itemize}
\item Case $2$: $T-\Lambda^k\neq\varnothing$, i.e., $|T-\Lambda^k|\geq1$
\end{itemize}
In this case, it is clear that $\omega_{T\cup \Lambda^k}\neq 0$.
 Without loss of generality,  we only  consider  $\alpha^{k+1}_N>0$,
then $\|A'[i]r^{k}\|_2>0$ for $\forall i \in W_{k+1}\subseteq (T\cup \Lambda^k)^c$.
In the following proof, we take $a_{\{i\}}=\frac{A^{'}[i]A_{T\cup \Lambda^k}\omega_{T\cup \Lambda^k}}{\|A^{'}[i]A_{T\cup \Lambda^k}\omega_{T\cup \Lambda^k}\|_2}\ (i\in W_{k+1})$, then with $\|a_{\{i\}}\|_2=1$.

 Using the definition of $\alpha^{k+1}_N$, we have that
\begin{eqnarray}\label{e15}
\alpha_N^{k+1}&=&\min\{\|A^{'}[i]r^k\|_2:i\in W_{k+1}\}\nonumber\\
&\leq&\min\{\|A^{'}[i]A_{T\cup \Lambda^k}\omega_{T\cup \Lambda^k}\|_2+\|A'[i](I-P_{\Lambda^k})e\|_2:i\in W_{k+1}\}\nonumber\\
&=&\min\{\langle A^{'}[i]A_{T\cup \Lambda^k}\omega_{T\cup \Lambda^k}, \frac{A^{'}[i]A_{T\cup \Lambda^k}\omega_{T\cup \Lambda^k}}{\|A^{'}[i]A_{T\cup \Lambda^k}\omega_{T\cup \Lambda^k}\|_2}\rangle\nonumber\\
&&\ \ +\|A'[i](I-P_{\Lambda^k})e\|_2:i\in W_{k+1}\}\nonumber\\
&=&\min\{\langle A_{T\cup \Lambda^k}\omega_{T\cup \Lambda^k}, A[i]a_{\{i\}}\rangle+\|A'[i](I-P_{\Lambda^k})e\|_2:i\in W_{k+1}\}\nonumber\\
&=&\min\{\langle A_{T\cup \Lambda^k}\omega_{T\cup \Lambda^k}, A\widetilde{a}_{\{i\}}\rangle+\|A'[i](I-P_{\Lambda^k})e\|_2:i\in W_{k+1}\}\nonumber\\
&\leq&\frac{\sum_{i\in W_{k+1}}\langle A\widetilde{\omega}_{T\cup \Lambda^k},  A\widetilde{a}_{\{i\}}\rangle+\sum_{i\in W_{K+1}}\|A'[i](I-P_{\Lambda^k})e\|_2}{N}.
\end{eqnarray}
By the definition of $\beta^{k+1}_1$ and the fact $A'_{\Lambda^k} r^k=0$,  it follows from \eqref{e16} and \eqref{e11} that
\begin{eqnarray}\label{e17}
\sqrt{K}\|\omega_{T\cup \Lambda^k}\|_2\beta_1^{k+1}&=&\sqrt{K}\|\omega_{T\cup \Lambda^k}\|_2\|A^{'}_{{T-\Lambda^k}}r^k\|_{2,\infty}\nonumber\\
&=&\sqrt{K}\|\omega_{T\cup \Lambda^k}\|_2\|[A_{T-\Lambda^k}\ A_{T\cap\Lambda^k}]^{'}r^k\|_{2,\infty}\nonumber\\
&=&\sqrt{K}\|\omega_{T\cup \Lambda^k}\|_2\|A^{'}_Tr^k\|_{2,\infty}\nonumber\\
&=&\sqrt{K}\|\omega_{T\cup \Lambda^k}\|_2\|[A_T\ A_{\Lambda^k-T}]^{'}r^k\|_{2,\infty}\nonumber\\
&=&\sqrt{K}\|\omega_{T\cup \Lambda^k}\|_2\|A^{'}_{T\cup \Lambda^k}r^k\|_{2,\infty}\nonumber\\
&\geq&\sqrt{K}\|\omega_{T\cup \Lambda^k}\|_2(\|A^{'}_{T\cup \Lambda^k}A_{T\cup \Lambda^k}\omega_{T\cup \Lambda^k}\|_{2,\infty}-\|A^{'}_{T\cup \Lambda^k}(I-P_{\Lambda^k})e\|_{2,\infty})\nonumber\\
&\geq&\|\omega_{T\cup \Lambda^k}\|_2\|A^{'}_{T\cup \Lambda^k}A_{T\cup \Lambda^k}\omega_{T\cup \Lambda^k}\|_{2}\nonumber\\
&&-\sqrt{K}\|\omega_{T\cup \Lambda^k}\|_2\|A^{'}_{T\cup \Lambda^k}(I-P_{\Lambda^k})e\|_{2,\infty}\nonumber\\
&\geq&\langle A\widetilde{\omega}_{T\cup \Lambda^k}, A\widetilde{\omega}_{T\cup \Lambda^k} \rangle-\sqrt{K}\|\omega_{T\cup \Lambda^k}\|_2\|A^{'}_{T\cup \Lambda^k}(I-P_{\Lambda^k})e\|_{2,\infty}.
\end{eqnarray}

Let $t=-\frac{\sqrt{\frac{K}{N}+1}-1}{\sqrt{\frac{K}{N}}}$
and
\begin{eqnarray*}
 t_i=-\frac{\sqrt{K}}{2N}(1-t^2)\|\omega_{T\cup \Lambda^k}\|_2,\ \ \ \ i\in W_{k+1}\subseteq ( T\cup\Lambda^k)^c.
\end{eqnarray*}
Then we have
\begin{eqnarray}\label{e18}
\sum_{i\in W_{k+1}}t_i^2=t^2\|\omega_{T\cup \Lambda^k}\|_2^2.
\end{eqnarray}
It follows from \eqref{e15}, \eqref{e17} and $t^2<1$ that
\begin{eqnarray}\label{eq3}
&&(1-t^4)\sqrt{K}\|\omega_{T\cup \Lambda^k}\|_2(\beta_1^{k+1}-\alpha_N^{k+1})\nonumber\\
&&\geq(1-t^4)\bigg(\langle A\widetilde{\omega}_{T\cup \Lambda^k}, A\widetilde{\omega}_{T\cup \Lambda^k} \rangle-\sqrt{K}\|\omega_{T\cup \Lambda^k}\|_2\|A^{'}_{T\cup \Lambda^k}(I-P_{\Lambda^k})e\|_{2,\infty}\bigg.\nonumber\\
&& \left.-\frac{\sqrt{K}\|\omega_{T\cup \Lambda^k}\|_2(\sum_{i\in W_{k+1}}\langle A\widetilde{\omega}_{T\cup \Lambda^k}, A\widetilde{a}_{\{i\}}\rangle+\sum_{i\in W_{K+1}}\|A'[i](I-P_{\Lambda^k})e\|_2)}{N}\right)\nonumber\\
&&=\|A(\widetilde{\omega}_{T\cup \Lambda^k}+\sum_{i\in W_{k+1}}t_i\widetilde{a}_{\{i\}})\|_2^2
-\|A(t^2\widetilde{\omega}_{T\cup \Lambda^k}-\sum_{i\in W_{k+1}}t_i\widetilde{a}_{\{i\}})\|_2^2-(1-t^4)\sqrt{K}\nonumber\\
&&\|\omega_{T\cup \Lambda^k}\|_2\left(\|A^{'}_{T\cup \Lambda^k}(I-P_{\Lambda^k})e\|_{2,\infty}
+\frac{\sum_{i\in W_{k+1}}\|A'[i](I-P_{\Lambda^k})e\|_2}{N}\right).
\end{eqnarray}
As in the proof of Theorem \ref{the2}, $l=|T\cap\Lambda^k|$ then $Nk+K-l+N\leq NK+1$.
Because  $A$ satisfies the block-RIP
with the block-RIC $\delta_{NK+1}$, $\widetilde{\omega}_{T\cup \Lambda^k}\neq 0$ with block-supp$(\widetilde{\omega}_{T\cup \Lambda^k})\subseteq T\cup \Lambda^k$, and $\|a_{\{i\}}\|_2=1$ for $i\in W_{k+1}\subseteq (T\cup\Lambda^k)^c$, it follows from \eqref{e18} and Lemma \ref{lem1} that
\begin{eqnarray}\label{e7}
&&\|A(\widetilde{\omega}_{T\cup \Lambda^k}+\sum_{i\in W_{k+1}}t_i\widetilde{a}_{\{i\}})\|_2^2
-\|A(t^2\widetilde{\omega}_{T\cup \Lambda^k}-\sum_{i\in W_{k+1}}t_i\widetilde{a}_{\{i\}})\|_2^2\nonumber\\
&&\geq(1-\delta_{Nk+K-l+N})\left(\|\widetilde{\omega}_{T\cup \Lambda^k}+\sum_{i\in W_{k+1}}t_i\widetilde{a}_{\{i\}}\|_2^2\right)\nonumber\\
&&\ \ -(1+\delta_{Nk+K-l+N})\left(\|t^2\widetilde{\omega}_{T\cup \Lambda^k}-\sum_{i\in W_{k+1}}t_i\widetilde{a}_{\{i\}}\|_2^2\right)\nonumber\\
&&=(1-\delta_{Nk+K-l+N})\left(\|\widetilde{\omega}_{T\cup \Lambda^k}\|_2^2+\sum_{i\in W_{k+1}}t_i^2\right)\nonumber\\
&&\ \ -(1+\delta_{Nk+K-l+N})\left(t^4\|\widetilde{\omega}_{T\cup \Lambda^k}\|_2^2+\sum_{i\in W_{k+1}}t_i^2\right)\nonumber\\
&&=(1-\delta_{Nk+K-l+N})\|\widetilde{\omega}_{T\cup \Lambda^k}\|_2^2(1+t^2)-(1+\delta_{Nk+K-l+N})\|\widetilde{\omega}_{T\cup \Lambda^k}\|_2^2(t^4+t^2)\nonumber\\
&&=(1-t^4)\|\widetilde{\omega}_{T\cup \Lambda^k}\|_2^2-\delta_{Nk+K-l+N}\|\widetilde{\omega}_{T\cup \Lambda^k}\|_2^2(1+t^2)^2\nonumber\\
&&=(1+t^2)^2\|\widetilde{\omega}_{T\cup \Lambda^k}\|_2^2\left(\frac{1-t^2}{1+t^2}-\delta_{Nk+K-l+N}\right)\nonumber\\
&&\geq(1+t^2)^2\|\widetilde{\omega}_{T\cup \Lambda^k}\|_2^2\left(\frac{1-t^2}{1+t^2}-\delta_{NK+1}\right).
\end{eqnarray}
Notice that there exist $i_k\in (T\cup\Lambda^k)$ and $ j_k\in (T\cup\Lambda^k)^c$ satisfying
\begin{eqnarray*}
  \|A^{'}_{T\cup \Lambda^k}(I-P_{\Lambda^k})e\|_{2,\infty}=\|A^{'}[i_k](I-P_{\Lambda^k})e\|_{2};\\
\|A^{'}_{(T\cup \Lambda^k)^c}(I-P_{\Lambda^k})e\|_{2,\infty}=\|A^{'}[j_k](I-P_{\Lambda^k})e\|_{2}.\\
\end{eqnarray*}
Hence,
\begin{eqnarray}\label{eq4}
&&\|A^{'}_{T\cup \Lambda^k}(I-P_{\Lambda^k})e\|_{2,\infty}+\|A^{'}_{(T\cup \Lambda^k)^c}(I-P_{\Lambda^k})e\|_{2,\infty}\nonumber \\
&&= \|A^{'}[i_k](I-P_{\Lambda^k})e\|_{2}+\|A^{'}[j_k](I-P_{\Lambda^k})e\|_{2}\nonumber\\
&&=\|A^{'}_{\{i_k \ j_k\}}(I-P_{\Lambda^k})e\|_{2,1}\nonumber\\
&&\leq\sqrt{2}\|A^{'}_{\{i_k \  j_k\}}(I-P_{\Lambda^k})e\|_{2}\nonumber\\
&&\leq\sqrt{2(1+\delta_{NK+1})}\|(I-P_{\Lambda^k})e\|_{2}\nonumber\\
&&\leq\sqrt{2(1+\delta_{NK+1})}\|e\|_{2}\nonumber\\
&&\leq\sqrt{2(1+\delta_{NK+1})}\varepsilon,
\end{eqnarray}
where we use Lemmas \ref{lem1} and \ref{lemm1} and the fact
\begin{eqnarray*}
  \|(I-P_{\Lambda^k})e\|_2\leq\|I-P_{\Lambda^k}\|_2\|e\|_2\leq\|e\|_2\leq\varepsilon.
\end{eqnarray*}
From \eqref{eq3}, \eqref{e7}, \eqref{eq4}, \eqref{e13} and \eqref{e14}, it follows that
\begin{eqnarray*}
&&(1-t^4)\sqrt{K}\|\widetilde{\omega}_{T\cup \Lambda^k}\|_2(\beta_1^{k+1}-\alpha_N^{k+1})\nonumber\\
&&\geq(1+t^2)^2\|\widetilde{\omega}_{T\cup \Lambda^k}\|_2^2\left(\frac{1-t^2}{1+t^2}-\delta_{NK+1}\right)-(1-t^4)\sqrt{K}\|\widetilde{\omega}_{T\cup \Lambda^k}\|_2\nonumber\\
&&\ \ \left(\|A^{'}_{T\cup \Lambda^k}(I-P_{\Lambda^k})e\|_{2,\infty}
+\frac{\sum_{i\in W_{k+1}}|\|A'[i] (I-P_{\Lambda^k})e\|_2}{N}\right)\nonumber\\
&&\geq(1+t^2)^2\|\widetilde{\omega}_{T\cup \Lambda^k}\|_2^2\left(\frac{1-t^2}{1+t^2}-\delta_{NK+1}\right)-(1-t^4)\sqrt{K}\|\widetilde{\omega}_{T\cup \Lambda^k}\|_2\sqrt{2(1+\delta_{NK+1})}\varepsilon\nonumber\\
&&=(1+t^2)^2\|\widetilde{\omega}_{T\cup \Lambda^k}\|_2\left(\left(\frac{1-t^2}{1+t^2}-\delta_{NK+1}\right)\|\widetilde{\omega}_{T\cup \Lambda^k}\|_2-\sqrt{2K(1+\delta_{NK+1})}\varepsilon\frac{1-t^2}{1+t^2}\right)\nonumber\\
&&\geq(1+t^2)^2\|\widetilde{\omega}_{T\cup \Lambda^k}\|_2\left(\left(\frac{1-t^2}{1+t^2}-\delta_{NK+1}\right)\|x_{T-\Lambda_k}\|_2-\sqrt{2K(1+\delta_{NK+1})}\varepsilon\frac{1-t^2}{1+t^2}\right)\nonumber\\
&&\geq(1+t^2)^2\|\widetilde{\omega}_{T\cup \Lambda^k}\|_2\left(\left(\frac{1}{\sqrt{\frac{K}{N}+1}}-\delta_{NK+1}\right)\sqrt{|T-\Lambda^k|}\min_{i\in T-\Lambda^k} \|x[i]\|_2\right.\nonumber\\
&&\ \ \ \left.-\frac{\sqrt{2K(1+\delta_{NK+1})}\varepsilon}{\sqrt{\frac{K}{N}+1}}\right)\nonumber\\
&&>0,
\end{eqnarray*}
i.e., $\beta_1^{k+1}>\alpha_N^{k+1}$ which guarantees at least one  index selected  from the correct support in the $(k+1)-$th iteration.

It remains to show that the BOMMP exactly stops under the stopping rule $\|r^k\|\leq\varepsilon$ when all the correct block indices are selected.
First, assume that $T-\Lambda^k=\varnothing$, then $T\subseteq \Lambda^k$ and $(I-P_{\Lambda^k})Ax=0$. Therefore,
it follows that
\begin{eqnarray*}
  \|r^k\|_2=\|(I-P_{\Lambda^k})Ax+(I-P_{\Lambda_k})e\|_2=\|(I-P_{\Lambda^k})e\|_2\leq\|e\|_2\leq\varepsilon.
\end{eqnarray*}

Second, assume that $T-\Lambda^k\neq \varnothing$, then it follows from the definition of the block-RIP and  \eqref{e14} that
\begin{eqnarray*}
  \|r^k\|_2&=&\|(I-P_{\Lambda^k})Ax+(I-P_{\Lambda^k})e\|_2\\
&\geq&\|(I-P_{\Lambda^k})Ax\|_2-\|(I-P_{\Lambda^k})e\|_2\\
&=&\|A_{T\cup \Lambda^k}\omega_{T\cup \Lambda^k}\|_2-\|(I-P_{\Lambda^k})e\|_2\\
&\geq&\sqrt{1-\delta_{|T\cup \Lambda^k|}}\|\omega_{T\cup \Lambda^k}\|_2-\|e\|_2\\
&\geq&\sqrt{1-\delta_{|T\cup \Lambda^k|}}\|x_{T-\Lambda^k}\|_2-\varepsilon\\
&\geq&\sqrt{1-\delta_{|T\cup \Lambda^k|}}\sqrt{|T-\Lambda^k|}\min_{i\in T}\|x[i]\|_2-\|e\|_2\\
&\geq&\sqrt{1-\delta_{NK+1}}\min_{i\in T}\|x[i]\|_2-\varepsilon\\
&>&\varepsilon.
\end{eqnarray*}
Therefore the  OMMP does not stop early.
The proof of  Theorem \ref{the4} is completed.
\end{proof}

\end{document}